\newcommand{\pre}[1]{{^\bullet{#1}}}
\newcommand{\post}[1]{{#1}^\bullet}
\newcommand{\neighb}[1]{^\bullet{#1}^\bullet}
\newcommand{\inp}[1]{{^\bigcirc{#1}}}
\newcommand{\outp}[1]{#1^\bigcirc}
\newcommand{\reach}[1]{\lbrack#1\rangle}
\newcommand{\om}[1]{\varphi(#1)}
\newcommand{\abs}[1]{\lvert #1 \rvert}
\newcommand{\act}{\mathcal{A}}
\newcommand{\nat}{\mathbb{N}}
\title{Compositional Discovery of Workflow Nets from Event Logs Using Morphisms\thanks{This work is supported by MIUR, Basic Research Program at the National Research University Higher School of Economics, and Russian Foundation for Basic Research, project No.16-01-00546.}}
\author{Luca Bernardinello\inst{2}, Irina Lomazova \inst{1} \and  Roman Nesterov\inst{1,2} \and Lucia Pomello\inst{2}}
\institute{National Research University Higher School of Economics, \\20 Myasnitskaya Ulitsa, 101000 Moscow, Russia  \and Dipartimento di Informatica Sistemistica e Comunicazione, \\
Universit\`{a} degli Studi di Milano-Bicocca, \\
	Viale Sarca 336 - Edificio U14, I-20126 Milano, Italia}
\begin{document}
	\maketitle
	
	\begin{abstract}
		This paper presents a modular approach to discover process models for multi-agent systems from event logs. System event logs are filtered according to individual agent behavior. We discover workflow nets for each agent using existing process discovery algorithms. We consider asynchronous interactions among agents. Given a specification of an interaction protocol, we propose a general scheme of workflow net composition. By using morphisms, we prove that this composition preserves soundness of components. A quality evaluation shows the increase in the precision of models discovered by the proposed approach.
		
		\keywords{Petri nets, workflow nets, multi-agent systems, morphisms, composition, process discovery}
	\end{abstract}
	
	\section{Introduction}
Process discovery focuses on the synthesis of process models from event logs containing the observed record of an information system behavior. 
Process models are usually developed at the design stage of the information system life-cycle. 
However, the real observed behavior of the information system can eventually differ from the designed one. 
In some cases, designers cannot develop precise models describing all possible scenarios. 
That is why, process discovery is a topic of great interest at the moment.
	
Many process discovery algorithms have been proposed over recent years. 
They include Genetic algorithms, HeuristicMiner, Fuzzy miner, Inductive miner, the algorithms based on integer linear programming (ILP), and on the theory of regions (see \cite{Augusto2017} for a comprehensive review). 
They can be applied to solve typical problems of event logs, e.g. incompleteness and noise.
	
Within multi-agent systems (MAS), models obtained by existing process discovery tools can be incomprehensible since concurrent interacting agents produce rather sophisticated behavior. 
Process models of MAS, obtained by the algorithms mentioned above, are not structured in such a way that it is possible to identify agent models as components as shown by the following example.

In order to give an intuition behind our approach, consider the system model shown in Fig. \ref{exintro}(a). Two interacting agents can be clearly identified as well as the way they interact by looking at the small grey places and arcs. Now take the event log produced by this system as the input to a process discovery algorithm. If we try to discover a model from this log directly, we can obtain, for instance, the results shown in Fig. \ref{exintro}(b) by \emph{inductive miner} and in Fig. \ref{exintro}(c) by \emph{ILP miner}. 
Although equivalent to the original one in their ability to reproduce the same event sequences, their structure hides the fact the original system is made of two agents, communicating through channels. 
More technically, the two agents correspond to two S-invariants in the original model, while they are not ``separable'' by means of S-invariants in the discovered models.
It is true that we can improve the overall structure of models by configuring algorithm parameters. 
However, they still will not reflect the underlying MAS organization.
\begin{figure}
	\vspace{-0.5cm}
	\centering
	\subfigure[an intitial system model]{\includegraphics[height=4cm]{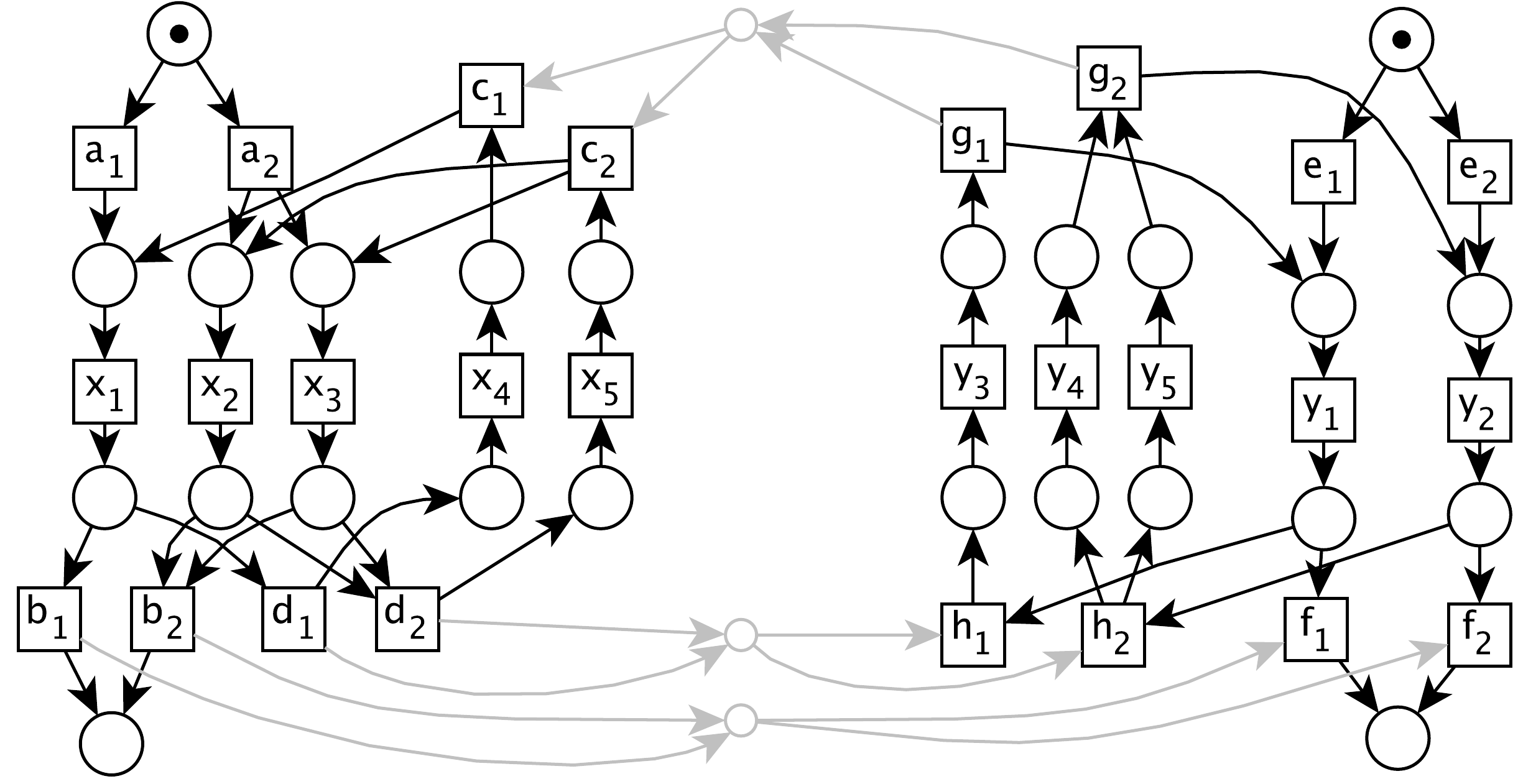}}
	\subfigure[inductive miner]{\includegraphics[height=4.8cm]{masintro_induct}}\hspace{0.5cm}
	\subfigure[ILP miner]{\includegraphics[height=5.4cm]{masintro_ilp}}
	\caption{Discovering process models of MAS from event logs}
	\label{exintro}
	\vspace{-0.65cm}
\end{figure}


We propose a compositional approach to address the problem of discovering process models of MAS from event logs clearly expressing agents as components and their interactions. 
We assume that agents interact asynchronously via message passing. 
System event logs are projected on each agent to discover component models in terms of workflow nets by using existing process discovery algorithms.
By means of morphisms, we can construct abstract models of the agents and compose them by adding channels for message passing. 
This composition models the protocol of agent interaction. 
In this way, it is possible to check soundness of this simplified model and to compose the discovered agent models on the basis of this protocol. 
We prove that this composition preserves soundness of the protocol and components by construction.
In this paper, we start constructing a \emph{formal background} to the general compositional approach to discovering process models of multi-agent systems from event logs.

		
The problem of discovering structured models from logs is not new and has been studied in several works based on composition. In \cite{Kalenkova14} the authors have designed a technique to discover readable models by decomposing transition systems. A special method to deal with process cancellations has been studied in \cite{Kalenkova14-1}. Regular process behavior is composed with cancellations using reset arcs.

A rather large amount of literature has been devoted to Petri net composition (see, for example, \cite{BoxCalc,Valk2003,Reisig2013}).
In particular, regarding asynchronous communication, several approaches have been proposed. 
Among the others, in \cite{Haddad13} asynchronous composition of Petri nets through channels has been studied considering preservation of various channel properties.
A general approach to asynchronous composition has also been discussed in \cite{Baldan01}, where open nets have been introduced.
The core problem of composition lies in preserving component properties.
Open nets have been used by many researches especially focused on modeling composite services (see, for example, \cite{vanHee2011,vanHee2010}).
They have proposed both structural and behavioral techniques assuring the correctness of composition.

Using morphisms is another possible way to achieve inheritance of component properties.
Composition of Petri nets via morphisms has been studied in several works \cite{Baldan01,Bednarczyk03,Nhat,Fabre06,Nielsen92,Padberg2003,Winskel87}.
We \kern -0.07em will \kern -0.05em use $\alpha$-morphisms and the related composition presented in \cite{Bernardinello2013}.
In general, $\alpha$-morphisms allow one to preserve/reflect properties checking structural and only \emph{local} behavioral constraints. 
In the particular case considered here, $\alpha$-morphisms preserve/reflect reachable markings and transition firings as well as preserve soundness as it will be shown in Section 3.


From a more practical point of view, many researchers have considered workflow net composition. 
Workflow nets (WF-nets) form a class of Petri nets used to model processes and services. 
Among the others, composition of WF-nets via shared resources has been studied in \cite{Lomazova13,WFRes} with a concern about soundness preservation. 
In addition, other approaches have been used for modeling and composing interacting workflow nets \cite{Lomazova10,Pankratius05}.
Many works have concerned the web service composition (see the survey \cite{Cardinale13}), where the authors have stressed that there is a lack of execution engines based on Petri nets.

The paper is organized as follows. 
Section 2 gives preliminary definitions used in this paper as well as composition by channels and $\alpha$-morphisms. 
Section 3  discusses properties preserved and reflected by $\alpha$-morphisms, relevant to WF-nets. 
Section 4 introduces a modular approach to construct models of MAS from event logs, using the same \emph{illustrative} example shown in Fig. \ref{exintro}.
In Section 5, we summarize the paper by discussing results and possible continuations.


%

	
\section{Preliminaries}
\vspace{-0.25cm}
$\nat$ denotes the set of non-negative integers. $A^+$ denotes the set of all finite non-empty sequences over $A$, and $A^* = A^+\cup \{\epsilon\}$, $\epsilon$ is the empty sequence.
A \textit{multiset} $m$ over a set $S$ is a function $m \colon S \rightarrow \nat$.
Let $m_1, m_2$ be two multisets over $S$. 
Then $m_1 \subseteq m_2 \Leftrightarrow m_1(s)\leq m_2(s)$ for all $s \in S$. 
Also, $m'=m_1+m_2 \Leftrightarrow m'(s)=m_1(s)+m_2(s)$, $m''=m_1-m_2 \Leftrightarrow m''(s)=\max(m_1(s)-m_2(s), 0)$ for all $s\in S$.  

A \textit{Petri net} is a triple $N=(P, T, F)$, where $P$ and $T$ are two disjoint sets of places and transitions, i.e. 
$P \cap T = \varnothing$, and $F \subseteq (P \times T) \cup (T \times P)$   
is a flow relation, where $ dom(F)  \cup cod(F) = P \cup T$ . We consider nets, s.t. $\forall t \in T \colon \abs{\pre{t}} \geq 1$ and $\abs{\post{t}} \geq 1$.

Let $N=(P, T, F)$ be a Petri net, and $X = P \cup T$. The set $\pre{x} = \{y \in X \vert (y, x) \in F\}$ denotes the \textit{preset} of $x \in X$. The set $\post{x} = \{y \in X \vert (x, y) \in F\}$ denotes the \textit{postset} of $x \in X$. The set $\neighb{x} = \pre{x} \cup \post{x}$ denotes the \textit{neighborhood} of $x \in X$.
Let $A \subseteq X$, then $\pre{A} = \bigcup_{x \in A}\pre{x}$, $\post{A} = \bigcup_{x \in A}\post{x}$, $\neighb{A} = \pre{A} \cup \post{A}$.

 By $N(A)$ we denote a \textit{subnet} of $N$ \textit{generated by} $A$, i.e. $N(A) = (P\cap A, T \cap A, F \cap (A \times A))$. 
 Let $N(A)$ be a subnet of $N$ generated by $A \subseteq X$. 
 The set $\inp{N(A)} = \{y \in A \vert \exists z \in X \setminus A \colon (z, y) \in F \text{ or } \pre{y} = \varnothing\}$ denotes the \textit{input} elements. 
 The set $\outp{N(A)} = \{y \in A \vert \exists z \in X \setminus A \colon (y, z) \in F \text{ or } \post{y} = \varnothing\}$ denotes the \emph{output} elements.


A \textit{marking} (state) of a Petri  net $N=(P, T, F)$ is a function $m \colon P \to \nat$. It is a multiset over a set of places $P$. A \textit{marked} Petri net $(N, m_0)$ is a Petri net together with its initial marking $m_0$.
 A marking $m$ \textit{enables} a transition $t \in T$, denoted $m\reach{t}$, if $\pre{t} \subseteq m$. The \textit{firing} of $t$ at $m$ leads to a new marking $m' = m - \pre{t} + \post{t}$, denoted $m\reach{t}m'$.

A sequence $w \in T^*$ is a \emph{firing sequence} of $N =(P, T, F, m_0)$ if $w=t_1t_2\dots t_n$ and $m_0\reach{t_1} m_1 \reach{t_2}\dots m_{n-1}\reach{t_n} m_n$. Then we can write $m_0\reach{w}m_n$. The set of all firing sequences of $N$ is denoted by $FS(N)$. 

A marking $m$ of $N =(P, T, F, m_0)$ is \emph{reachable} if $\exists w \in FS(N) \colon m_0\reach{w}m$. 
The set of all markings of $N$ reachable from $m$ is denoted by $\reach{m}$. A reachable marking is \emph{dead} if it does not enable any transition. $N$ is \emph{deadlock-free} if no reachable marking is dead. 
$N$ is \emph{safe} if $\forall p \in P\, \forall m \in \reach{m_0} \colon m(p) \leq 1$. Then we will specify reachable markings as subsets of places.


A \emph{state machine} is a connected Petri net $N =(P, T, F)$,  s.t. $\forall t \in T\colon \abs{\pre{t}}=\abs{\post{t}}=1$. 
A subnet of a marked Petri net $N = (P, T, F, m_0)$ identified by a subset of places $A \subseteq P$ and its neighborhood, i.e. 
$N(A \cup {\neighb{A}})$, is a \emph{sequential component} of $N$ if it is a state machine and has a single token in the initial marking. 
$N$ is \emph{covered} by sequential components if every place of $N$ belongs to at least one sequential component. 
Then $N$ is said to be \emph{state machine decomposable} (SMD).

Semantics of a marked Petri net is given by its \emph{unfolding} as defined below.

	Let $N = (P, T, F)$ be a Petri net, and $F^*$ be the reflexive transitive closure of F. Then $ \forall x, y \in P \cup T \colon$
		(a) $x$ and y are in \emph{causal} relation, denoted $x < y$, if $(x, y) \in F^*$; (b) $x$ and $y$ are in \emph{conflict} relation, denoted $x \# y$, if $\exists t_x, t_y \in T$, s.t. $t_x \neq t_y$, $\pre{t_x} \cap \pre{t_y} \neq \varnothing$, and $t_x < x$, $t_y < y$.

\begin{definition}
	A Petri net $O = (B, E, F)$ is an occurrence net if:
	\begin{enumerate}
		\item $\forall b \in B \colon \abs{\pre{b}} \leq 1$.
		\item $F^*$ is a partial order.
		\item $\forall x \in B \cup E \colon \{y \in B \cup E \vert y <x\}$ is finite.
		\item $\forall x, y \in B \cup E \colon x \# y \Rightarrow x \neq y$.
	\end{enumerate}
\end{definition}

By definition, $O$ is acyclic. $Min(O)$ denotes the set of minimal nodes of $O$ w.r.t. $F^*$, i.e. the elements having the empty preset. 
We only consider nets having transitions with non-empty presets and postsets, then $Min(O) \subseteq B$.

\begin{definition}
	Let $N=(P, T, F, m_0)$ be a marked safe Petri net, $O=(B, E, F)$ be an occurrence net, and $\pi:B\cup E\to P\cup T$ be a map. $(O, \pi)$ is a branching process of $N$ if:
	\vspace{-0.2cm}
	\begin{enumerate}
		\item $\pi(B) \subseteq P$ and $\pi(E) \subseteq T$.
		\item $\pi$ restricted to Min(O) is a bijection between Min(O) and $m_0$.
		\item $\forall t \in T\colon \pi$ restricted to $\pre{t}$ is a bijection between $\pre{t}$ and $\pre{\pi(t)}$, and similarly for $\post{t}$ and $\post{\pi(t)}$.
		\item $\forall t_1, t_2 \in T \colon$ if $\pre{t_1} = \pre{t_2}$ and $\pi(t_1) = \pi(t_2)$, then $t_1 = t_2$.
	\end{enumerate}
\end{definition}

The \emph{unfolding} of $N$, denoted $\mathcal{U}(N)$, is the maximal branching process of $N$, 
s.t. any other branching process of $N$ is isomorphic to a 
subnet of $\mathcal{U}(N)$ with the map $\pi$ restricted to the elements of this subnet. The map associated with the unfolding is denoted $u$ and called \emph{folding}.

\emph{Workflow nets} form a subclass of Petri nets used for process modeling. 
We define generalized workflow nets having an initial state  $m_0$ and a final state $m_f$. 

\begin{definition}\label{WF}
	A marked Petri net $N = (P, T, F, m_0, m_f)$ is a generalized workflow net (GWF-net) if and only if:
	\begin{enumerate}
		\item $m_0 = \{s \in P \,\vert\, \pre{s} = \varnothing\}$ and $m_0 \neq \varnothing$.
		\item $m_f = \{f \in P \,\vert\, \post{f} = \varnothing \}$ and $m_f \neq \varnothing$.
		\item $\forall x \in P \cup T \,\,\exists s \in m_0 \,\,\exists f \in m_f \colon (s, x) \in F^* \text{ and } (x, f) \in F^*$.
	\end{enumerate}
\end{definition}

If $\abs{m_0} = \abs{m_f} = 1$, then a generalized workflow net is called just  a workflow net (WF-net, for short). 
State machine decomposable GWF-nets are safe.
The important correctness property of GWF-nets is \emph{soundness} \cite{Aalst11}. 

\begin{definition}\label{sound}
	A GWF-net $N = (P, T, F, m_0, m_f)$ is sound if and only if:
	\begin{enumerate}
		\item $\forall m \in \reach{m_0} \colon m_f \in \reach{m}$.
		\item $\forall m \in \reach{m_0} \colon m_f \subseteq m \Rightarrow m = m_f$. 
		\item $\forall t \in T \, \exists m \in \reach{m_0} \colon m\reach{t}$.
	\end{enumerate}
\end{definition}

We consider composition of two  GWF-nets by adding \emph{channels}. 
A set of channel places (channels, for short) is denoted by $P_c$.
This approach is similar to the one presented in \cite{Haddad13}.
They model asynchronous communication via message passing.  
Some transitions of two nets can either \emph{send} a message by an incoming arc to a channel or \emph{receive} a message by an outgoing arc from a channel.
We assume to know exactly which transitions send/receive messages to/from which channel places.
In order to simplify the notation, we will not introduce special transition labels indicating sending/receiving.

Two GWF-nets can be composed via a set of channels $P_c$ iff any channel receiving a message from one GWF-net send it only to the other GWF-net.


\begin{definition}\label{netplus}
	Let $N_i = (P_i, T_i, F_i, m_0^i, m_f^i)$ be a GWF-net for $i=1, 2$, s.t. $N_1$ and $N_2$ are disjoint, where $P_1 \cap P_2 = \varnothing$ and $T_1 \cap T_2 = \varnothing$. 
	Let $P_c$ be a set of channels. A channel-composition of $N_1$ and $N_2$, denoted $N_1 \oplus_{P_c} N_2$, is a Petri net $N = (P, T, F, m_0, m_f)$, where:
	\vspace{-0.1cm}
	\begin{enumerate}
		\item $P = P_1 \cup P_2 \cup P_c$, where $P_c \cap (P_1 \cup P_2) = \varnothing$.
		\item $T = T_1 \cup T_2$.
		\item $F = F_1 \cup F_2 \cup F_c$, where $F_c \subseteq (P_c \times (T_1 \cup T_2))\cup ((T_1 \cup T_2) \times P_c)$.
		\item $m_0 = m_0^1 \cup m_0^2$ and $m_f = m_f^1 \cup m_f^2$.
		\item $\forall p \in P_c \colon$
		\begin{enumerate}
			\item $(\pre{p} \subseteq T_1$ or $\pre{p} \subseteq T_2)$ and $(\post{p} \subseteq T_1$ or $\post{p} \subseteq T_2)$,
			\item $\pre{p} \subseteq T_i \Leftrightarrow \post{p} \subseteq T_{(i+1)\bmod 2}$,
			\item $\pre{p} \neq \varnothing$ and $\post{p} \neq \varnothing$.
		\end{enumerate}
	\end{enumerate}
\end{definition}

\begin{remark}\label{netplusprop}
	The operation $\oplus_{P_c}$
	is \emph{commutative}, i.e. $N_1 \oplus_{P_c} N_2 = N_2 \oplus_{P_c} N_1$. 
\end{remark}

By the following proposition, the class of GWF-nets is closed under the channel-composition. Figure \ref{alpha_ex}(a) provides an example of channel-composition of $N_1$ and $N_2$, where channels are indicated by small gray places. 

\begin{proposition} \label{closed}
	If $N_1$ and $N_2$ are GWF-nets, then $N_1 \oplus_{P_c} N_2$ is a GWF-net.
\end{proposition}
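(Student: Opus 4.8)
The plan is to verify directly the three defining conditions of Definition \ref{WF} for $N = N_1 \oplus_{P_c} N_2$, exploiting the fact that the component flow relations $F_1$ and $F_2$ survive as subrelations of $F$ (so $F_1^* \subseteq F^*$ and $F_2^* \subseteq F^*$ by monotonicity of the closure) and that the only new arcs, those of $F_c$, attach exclusively to channel places.

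First I would check conditions 1 and 2, which concern source and sink places. Since $F_c \subseteq (P_c \times T) \cup (T \times P_c)$, no arc of $F_c$ enters or leaves a place of $P_1 \cup P_2$; hence for every $s \in P_1 \cup P_2$ its preset and postset in $N$ coincide with those in the component $N_i$ to which it belongs. Consequently the elements of $P_1 \cup P_2$ with empty preset in $N$ are exactly $m_0^1 \cup m_0^2$, and dually for sinks. It then remains to rule out channel places: by Definition \ref{netplus}(5c) every $p \in P_c$ has $\pre{p} \neq \varnothing$ and $\post{p} \neq \varnothing$, so $p$ is neither a source nor a sink. Therefore $\{s \in P \mid \pre{s} = \varnothing\} = m_0^1 \cup m_0^2 = m_0$ and $\{f \in P \mid \post{f} = \varnothing\} = m_f^1 \cup m_f^2 = m_f$, both nonempty since the components are GWF-nets.

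The main work, and the only genuinely new part, is condition 3: every node must lie on an $F^*$-path from some source to some sink. For a node $x \in P_1 \cup T_1$ this is inherited at once, since the GWF-net property of $N_1$ supplies $s \in m_0^1$ and $f \in m_f^1$ with $(s,x), (x,f) \in F_1^* \subseteq F^*$; the case $x \in P_2 \cup T_2$ is symmetric. The remaining case is a channel place $p \in P_c$. Here I would use condition 5c to pick transitions $t \in \pre{p}$ and $t' \in \post{p}$. By condition 5a these belong to the two components, say $t \in T_i$ and $t' \in T_{(i+1) \bmod 2}$, so applying the GWF-net property of each component to $t$ and to $t'$ yields a source $s$ and a sink $f$ with $(s,t) \in F^*$ and $(t',f) \in F^*$. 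Concatenating with the arcs $(t,p)$ and $(p,t')$ gives $(s,p) \in F^*$ and $(p,f) \in F^*$, as required.

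I expect the argument to be routine rather than deep; the only point needing care is that condition 3 for channel places relies essentially on the nonemptiness requirement 5c, since a channel with empty preset or empty postset would be a node not covered by any source-to-sink path. I would therefore make sure that hypothesis is invoked explicitly, and likewise that 5a is used to place $t$ and $t'$ in definite components, rather than leaving either step implicit.
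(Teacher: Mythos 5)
Your proof is correct and follows essentially the same route as the paper: the only substantive step, showing each channel place $p \in P_c$ lies on a source-to-sink path, is handled exactly as in the paper's proof, by picking $t \in \pre{p}$ and $t' \in \post{p}$ via Definition \ref{netplus}.5 and appealing to Definition \ref{WF}.3 in each component. Your explicit verification of conditions 1 and 2 (and of condition 3 for non-channel nodes) covers routine steps that the paper leaves implicit, which is a harmless and arguably welcome addition.
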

\vspace{-0.4cm}
\begin{proof}
	We show that each channel lies on a path from a place in $m_0^i$ to a place in $m_f^j$, where $j={(i+1)\bmod 2}$ and $i=1, 2$.  Take $p \in P_c$. By Def. \ref{netplus}.5, $\pre{p} \subseteq T_i \Leftrightarrow \post{p} \subseteq T_j$. Take $t_i \in \pre{p}$ and $t_j \in \post{p}$. By Def. \ref{WF}.3, $\exists s \in m_0^i$, s.t. $(s, t_i) \in F_i^*$. Then $(s, p) \in F^*$. By Def. \ref{WF}.3, $\exists f \in m_f^j$, s.t. $(t_j, f) \in F_j^*$. Then $(p, f) \in F^*$.   \qed
\end{proof}

Further, we recall the definition of $\alpha$-morphisms (see Definition 6 and Definition 7 in \cite{Bernardinello2013}) supporting abstraction and refinement for Petri nets. An example of $\alpha$-morphism is shown in Fig. \ref{alpha_ex}(b), where the refinement is given by the shaded ovals and by the transition labeling explicitly. 
Refinement can also require splitting transitions of the abstract model.
After giving the formal definition of $\alpha$-morphisms, we will provide an intuition behind them.
\begin{definition}\label{alpha}
	Let $N_i = (P_i, T_i, F_i, m_0^i)$ be a marked SMD safe Petri net, $X_i = P_i \cup T_i$, $i=1, 2$. 
	An $\alpha$-morphism from $N_1$ to $N_2$ is a total surjective map $\varphi \colon X_1 \to X_2$, also denoted $\varphi\colon N_1 \to N_2$, such that:
	\begin{enumerate}
		\item $\varphi(P_1) = P_2$.
		\item $\varphi(m_0^1) = m_0^2$.
		\item $\forall t \in T_1 \colon$ if $\om{t} \in T_2$, then $\om{\pre{t}}=\pre{\om{t}}$ and $\om{\post{t}}=\post{\om{t}}$.
		\item $\forall t \in T_1 \colon$ if $\om{t} \in P_2$, then $\om{\neighb{t}}=\{\om{t}\}$.
		\item $\forall p_2 \in P_2 \colon$
		\begin{enumerate}
			\item $N_1(\varphi^{-1}(p_2))$ is an acyclic net;
			\item $\forall p \in \inp{N_1(\varphi^{-1}(p_2))} \colon \om{\pre{p}} \subseteq \pre{p_2}$, and if $\pre{p_2} \neq \varnothing$, then $\pre{p} \neq \varnothing$;
			\item $\forall p \in \outp{N_1(\varphi^{-1}(p_2))} \colon \om{\post{p}} = \post{p_2}$;
			\item $\forall p \in P_1 \cap \varphi^{-1}(p_2) \colon p \notin \inp{N_1(\varphi^{-1}(p_2))} \Rightarrow \om{\pre{p}}=p_2 \text{ and }$ \\$p \notin \outp{N_1(\varphi^{-1}(p_2))} \Rightarrow \om{\post{p}} = p_2$;
			\item $\forall p \in P_1 \cap \varphi^{-1}(p_2)\colon$ there is a sequential component $N' = (P', T', F')$ of $N_1$, s.t. $p \in P'$ and $\varphi^{-1}(\neighb{p_2}) \subseteq T'$.
		\end{enumerate}
	\end{enumerate}
\end{definition}

\begin{figure}
	\centering
	\vspace{-0.7cm}
	\subfigure[$N_1 \oplus_{P_c} N_2$, $P_c=\{x, y, z\}$]{\includegraphics[height=4.8cm]{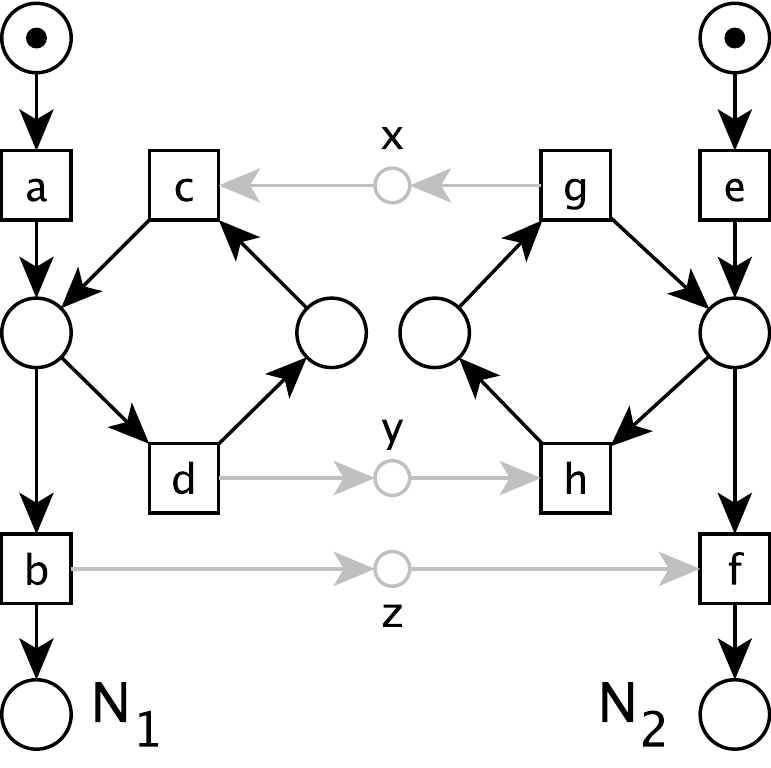} \label{channelcomp}}\hspace{0.5cm}
	\subfigure[$\alpha$-morphism $\varphi \colon N_2' \to N_2$]{\includegraphics[height=5.44cm]{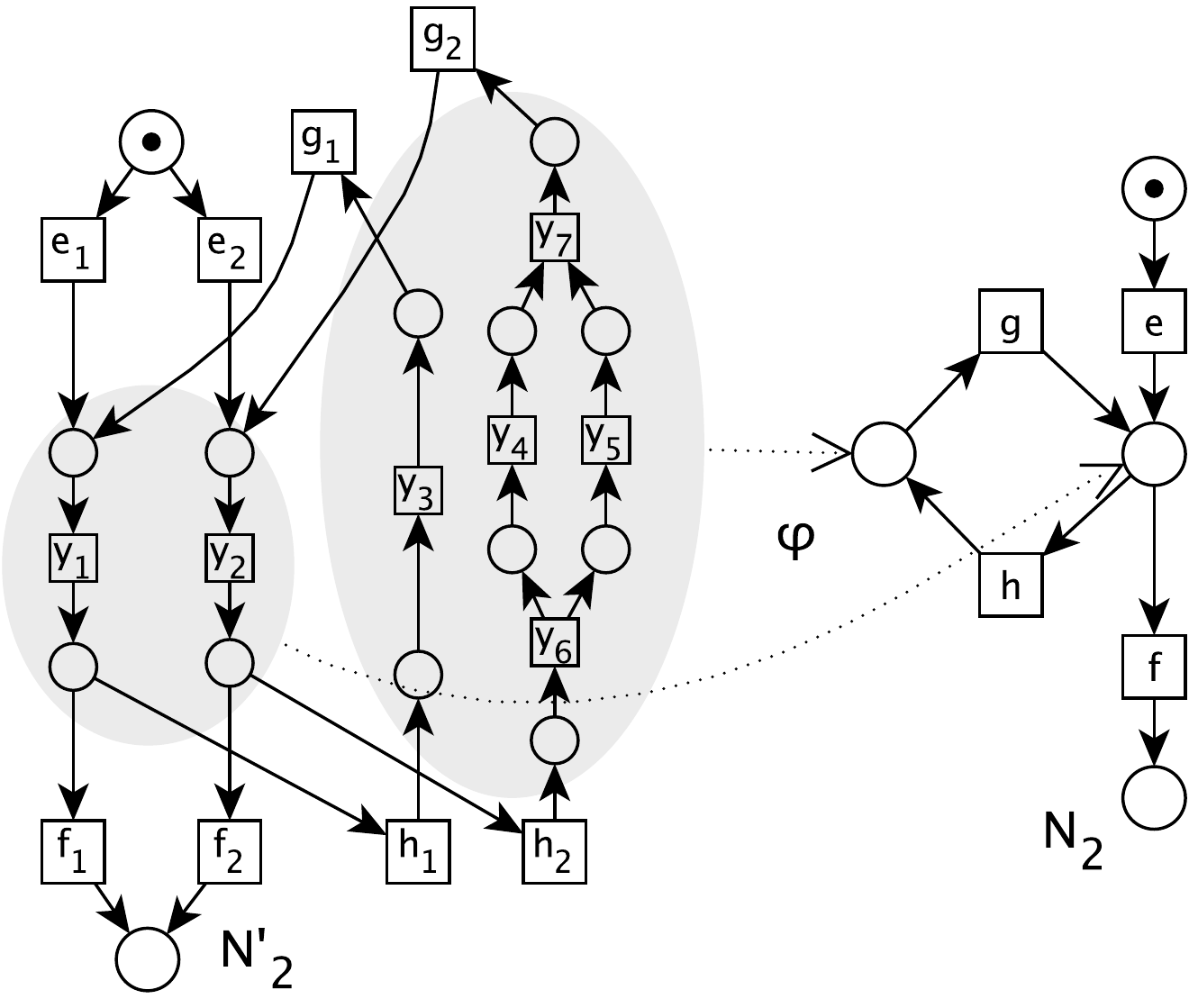}}
	\caption{Introductory examples}
	\label{alpha_ex}
	\vspace{-0.4cm}
\end{figure}

By definition, $\alpha$-morphisms allow us to refine places by replacing them with subnets. Thus, if a transition is mapped on a place, then its neighborhood should be mapped on the same place (4). If a transition is mapped on a transition, it should have the corresponding neighborhood (3). 

Indeed, $\alpha$-morphisms are motivated by the attempt to ensure that properties of an abstract model hold in its refinement. Each output place of a subnet should have the same choices as its abstraction does (5c). However, input places do not need this constraint (5b), because the choice between them is done before, since there are no concurrent events in the neighborhood of the subnet\,(5e). Moreover, 5d guarantees that presets and postsets of places, internal to the subnet, are mapped to the same place as the subnet. Conditions 5a-5e together ensure the intuition behind $\alpha$-morphisms. If a subnet of $N_1$ refines a place in $N_2$, then it behaves ``in the same way''. More precisely, by Lemma 1 of \cite{Bernardinello2013}, (a) no input transition of the subnet is enabled whenever a token is inside the subnet; and (b) firing an output transition of the subnet empties it.

\section{Properties Preserved and Reflected by $\alpha$-Morphisms}
In this section, we study properties preserved and reflected by $\alpha$-morphisms. In \cite{Bernardinello2013} several properties of $\alpha$-morphisms have already been studied. Here we will mention some of them and consider properties of $\alpha$-morphisms for GWF-nets.
\medskip

In the following propositions, we assume $N_i = (P_i, T_i, F_i, m_0^i)$ to be a marked SMD safe Petri net for $i=1, 2$, s.t. there is an $\alpha$-morphism $\varphi \colon X_1 \to X_2$, where $X_i = P_i \cup T_i$.
\medskip

To begin with, $\alpha$-morphisms preserve the structure of GWF-nets.

\begin{proposition}\label{WFpreser}
	If $N_1$ is a GWF-net, then $N_2$ is a GWF-net.
\end{proposition}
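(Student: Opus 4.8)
The plan is to exhibit the natural candidate markings for $N_2$, namely $m_0^2 := \om{m_0^1}$ (which is already forced by Definition \ref{alpha}.2) and $m_f^2 := \om{m_f^1}$, and then to verify the three defining conditions of Definition \ref{WF} for $N_2$ in turn. Non-emptiness of both markings is immediate, since $m_0^1, m_f^1 \neq \varnothing$ and $\varphi$ is a total map, so the images are non-empty. The real content is to show that these images coincide with the sets of source and sink places of $N_2$ (Definition \ref{WF}.1--2) and that every node of $N_2$ lies on a source-to-sink path (Definition \ref{WF}.3).

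First I would prove an auxiliary fact: $\varphi$ preserves the flow closure, i.e. $(x,y) \in F_1^* \Rightarrow (\om{x}, \om{y}) \in F_2^*$ for all $x,y \in X_1$. By reflexivity and transitivity of $F_2^*$ it suffices to treat a single arc $(x,y)\in F_1$, which joins a place and a transition. If $\om{x} = \om{y}$, reflexivity settles it. Otherwise, say $x$ is the place and $y$ the transition (the other case is symmetric): Definition \ref{alpha}.4 rules out $\om{y} \in P_2$, since that would collapse the whole neighbourhood of $y$ onto its image and force $\om{x}=\om{y}$; hence $\om{y} \in T_2$, and Definition \ref{alpha}.3 gives $\om{\pre{y}} = \pre{\om{y}}$, so $\om{x} \in \pre{\om{y}}$, i.e. $(\om{x},\om{y}) \in F_2$. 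With this lemma Definition \ref{WF}.3 for $N_2$ is quick: given $x_2 \in P_2 \cup T_2$, choose a preimage $x_1$ by surjectivity, apply Definition \ref{WF}.3 to $N_1$ to obtain $s_1 \in m_0^1$ and $f_1 \in m_f^1$ with $(s_1,x_1),(x_1,f_1) \in F_1^*$, and push forward: $\om{s_1} \in m_0^2$, $\om{f_1} \in m_f^2$, and the two closures transfer by the lemma.

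The main work, and what I expect to be the main obstacle, is Definition \ref{WF}.1--2: establishing $\om{m_0^1} = \{s \in P_2 \mid \pre{s} = \varnothing\}$ and dually $\om{m_f^1} = \{f \in P_2 \mid \post{f} = \varnothing\}$. For the inclusion $\subseteq$, take $s_1 \in m_0^1$; then $\pre{s_1} = \varnothing$ makes $s_1$ an input place of $N_1(\varphi^{-1}(\om{s_1}))$, and the second clause of Definition \ref{alpha}.5b, in contrapositive form ($\pre{s_1}=\varnothing \Rightarrow \pre{\om{s_1}}=\varnothing$), yields $\pre{\om{s_1}} = \varnothing$. For the reverse inclusion, take $s_2 \in P_2$ with $\pre{s_2} = \varnothing$ and examine the refining subnet $N_1(\varphi^{-1}(s_2))$: by Definition \ref{alpha}.5a it is acyclic, hence (the nets being finite) it has a minimal node, which must be a \emph{place}, because any transition $t$ in the subnet satisfies $\pre{t} \subseteq \varphi^{-1}(s_2)$ by Definition \ref{alpha}.4 together with $\pre{t} \neq \varnothing$, so $t$ always has a predecessor inside the subnet and is never minimal. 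This minimal place is an input place, and Definition \ref{alpha}.5b with $\pre{s_2} = \varnothing$ forces its $N_1$-preset to be empty, placing it in $m_0^1$; hence $s_2 \in \om{m_0^1}$. Condition \ref{WF}.2 is the exact dual, using output places, Definition \ref{alpha}.5c in place of 5b, maximal nodes, and the constraint $\post{t} \neq \varnothing$. I expect the delicate points to be the bookkeeping that minimal and maximal nodes of the refining subnets are always places, and the careful reading of the asymmetric clauses 5b and 5c; the remaining steps are routine.
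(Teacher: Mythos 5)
Your proof is correct, and its skeleton coincides with the paper's: both verify the three clauses of Definition \ref{WF} for $N_2$ directly from the $\alpha$-morphism axioms, using clause 5b for source places, clause 5c for sink places, and clauses 3--4 together with surjectivity for the path condition. You do, however, go beyond the paper's own argument in two ways. First, for Definition \ref{WF}.3 you isolate an explicit arc-preservation lemma (each arc of $F_1$ is either collapsed by $\varphi$ or mapped to an arc of $F_2$, hence $F_1^*$ is carried into $F_2^*$), which turns the path condition into a direct push-forward; the paper proves the same clause by contradiction, walking along a path in $N_1$ and ``mapping it on $N_2$'' informally, and your lemma is precisely what legitimizes that informal step. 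Second, and more substantively, you prove the reverse inclusions in clauses 1--2, i.e.\ that every place of $N_2$ with empty preset (postset) already lies in $\om{m_0^1}$ (in $\om{m_f^1}$). The paper establishes only the forward inclusions --- places of $\om{m_0^1}$ have empty presets, places of $\om{m_f^1}$ have empty postsets --- even though Definition \ref{WF}.1--2 demand set equalities; the missing halves do follow from clause 3 (a source place of $N_2$ reached from $m_0^2$ by a path of positive length would have a nonempty preset), but the paper never draws that consequence. Your self-contained argument --- the refining subnet $N_1(\varphi^{-1}(p_2))$ is acyclic by 5a, its minimal (maximal) nodes must be places because a transition mapped onto a place keeps its nonempty preset (postset) inside the subnet by Definition \ref{alpha}.4, and then 5b (5c) forces such a boundary place to be a global source (sink) of $N_1$, hence to lie in $m_0^1$ ($m_f^1$) --- closes this gap without any appeal to clause 3, at the cost of exactly the bookkeeping you anticipated; what the paper's terser version buys is brevity, at the price of leaving the set equalities implicit.
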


\begin{proof}
	We prove that $N_2$ satisfies three structural conditions of Def. \ref{WF}.
				
		\textbf{1.} By Def. \ref{alpha}.2, $\om{m_0^1} = m_0^2$. Suppose $\exists p_2 \in m_0^2 \colon \pre{p_2} \neq \varnothing$. By Def. \ref{alpha}.5b, $\forall p \in \inp{N_1(\varphi^{-1}(p_2))} \colon$ if $\pre{p_2} \neq \varnothing$, then $\pre{p} \neq \varnothing$. Take $p_1 \in m_0^1$, s.t. $\om{p_1} = p_2$. Since $p_1 \in \inp{N_1(\varphi^{-1}(p_2))}$, then $\pre{p_1} \neq \varnothing$. By Def. \ref{WF}.1, $\forall p \in m_0^1 \colon \pre{p} = \varnothing$. Then, $\pre{p_2} =\varnothing$ and $\forall p \in m_0^2 \colon \pre{p} = \varnothing$. 
			
		\textbf{2.}	By Def. \ref{WF}.2, $m_f^1 \subseteq P_1$, s.t. $\forall p \in {m_f^1} \colon \post{p} = \varnothing$. Denote $\om{m_f^1}$ by $m_f^2 \subseteq P_2$. Suppose $\exists p_2 \in m_f^2 \colon \post{p_2} \neq \varnothing$. Take $p_1 \in m_f^1$, s.t. $\varphi(p_1) = p_2$. By Def. \ref{alpha}.5c, $\forall p \in \outp{N(\varphi^{-1}(p_2))} \colon \om{\post{p}} = \post{p_2}$. Since $p_1 \in \outp{N(\varphi^{-1}(p_2))}$, then $\post{p_1} \neq \varnothing$. But $p_1 \in m_f^1$ and $\post{p_1} = \varnothing$ (by Def. \ref{WF}.2). Then $\post{p_2} = \varnothing$ and $\forall p \in {m_f^2} \colon \post{p}=\varnothing$.
			
		\textbf{3.}	Suppose $\exists x_2 \in X_2$, s.t. $\forall p \in m_0^2 \colon (p, x_2) \notin F_2^*$. By Def. \ref{alpha}, $\varphi^{-1}(x_2) = \{x_1^1, \dots, x_1^k\} \subseteq X_1$. If $x_2 \in T_2$, then $\varphi^{-1}(x_2) \subseteq T_1$, and take $x_1 \in \varphi^{-1}(x_2)$. If $x_2 \in P_2$, then take $x_1 \in \inp{N(\varphi^{-1}(x_2))}$. By Def. \ref{WF}.3, $\exists s \in m_0^1 \colon (s, x_1) \in F_1^*$. Then, $\om{\pre{x_1}} \in \pre{x_2}$ or $\om{\pre{x_1}} = x_2$. We come backward through the whole path from $s$ to $x_1$ mapping it on $N_2$ with $\varphi$. We obtain that $\exists x' \in X_2 \colon (x', x_2) \in F_2^*$, s.t. $\om{s} = x'$. 
			
		Suppose $\exists x_2 \in X_2$, s.t. $\forall p \in m_f^2 \colon (x_2, p) \notin F_2^*$. By Def. \ref{alpha}, $\varphi^{-1}(x_2) = \{x_1^1, \dots, x_1^k\} \subseteq X_1$. If $x_2 \in T_2$, then $\varphi^{-1}(x_2) \subseteq T_1$, and take $x_1 \in \varphi^{-1}(x_2)$. If $x_2 \in P_2$, then take $x_1 \in \outp{N(\varphi^{-1}(x_2))}$. By Def. \ref{WF}.3, $\exists f \in m_f^1 \colon (x_1, f) \in F_1^*$. Then, $\om{\post{x_1}} \in \post{x_2}$ or $\om{\post{x_1}} = x_2$. We come forward through the whole path from $x_1$ to $f$ mapping it on $N_2$ with $\varphi$. We obtain that $\exists x' \in X_2 \colon (x_2, x') \in F_2^*$, s.t. $\om{f} = x'$. \qed
\end{proof}

\begin{remark}It follows from Proposition \ref{WFpreser} that $\om{m_f^1} = m_f^2$.
	In the general case the converse of Proposition \ref{WFpreser} is not true. In fact,
	$\alpha$-morphisms do not reflect the initial state of GWF-nets properly (see Fig. \ref{gwf_nonrefl}).
\end{remark}

\begin{figure}
	\centering
	\vspace{-0.5cm}
	\subfigure[]{\includegraphics[height=4cm]{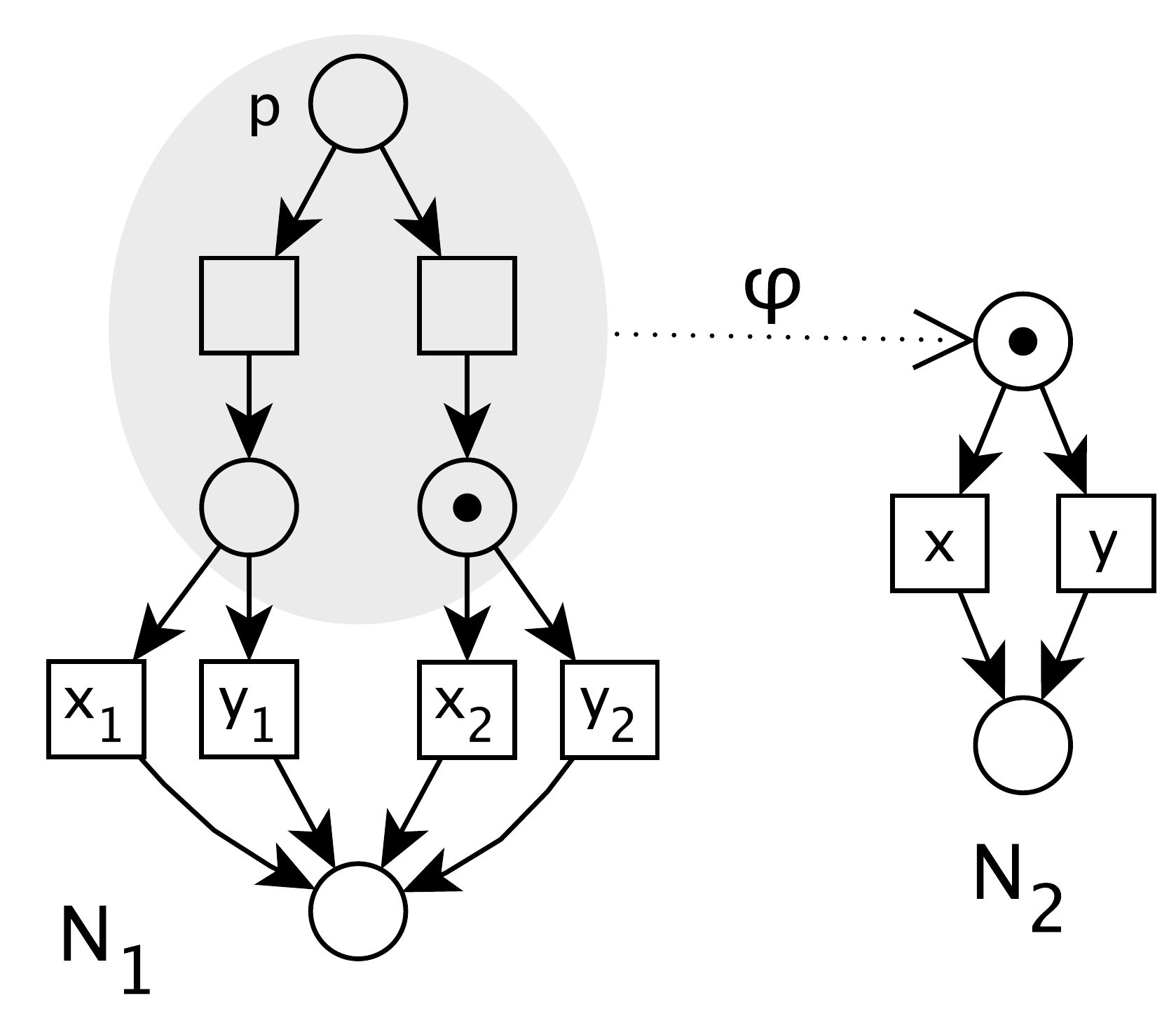} \label{gwf_nonrefl}}
	\subfigure[]{\includegraphics[height=4cm]{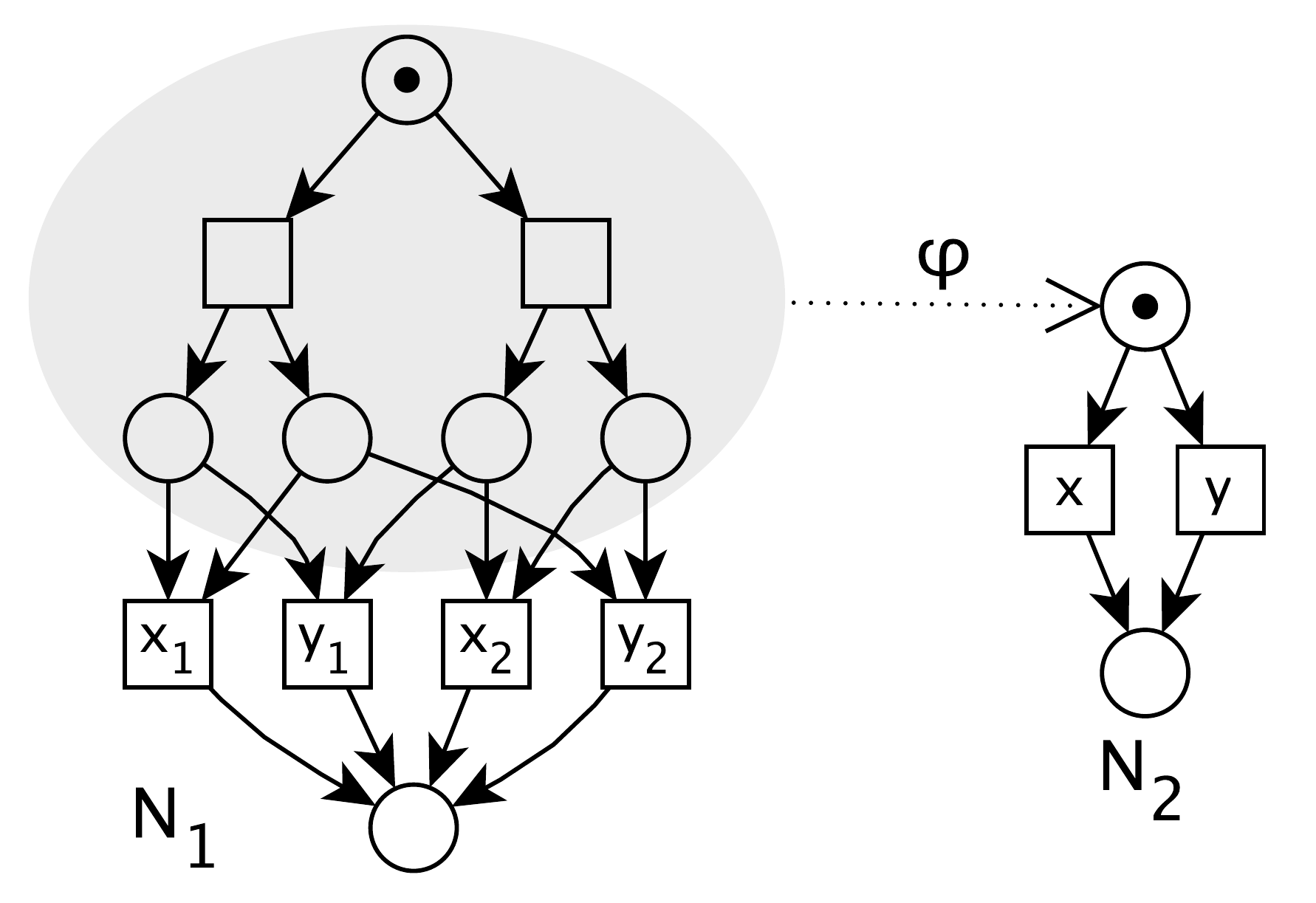} \label{sound_nonrefl}}
	\caption{An $\alpha$-morphism: two examples}
	\vspace{-0.4cm}
\end{figure}

Recall from \cite{Bernardinello2013} that $N_1$ is \emph{well marked} w.r.t. $\varphi$ if each input place of a subnet in $N_1$,
refining a marked place in $N_2$, is marked. 
Consider the $\alpha$-morphism shown in Fig. \ref{gwf_nonrefl}, the token of the shaded subnet must be placed into $p$ to make $N_1$ well marked w.r.t. to $\varphi$. 
The structure of GWF-nets is reflected under the well-markedness of $N_1$ (see the following Proposition). 
However, if $N_1$ is a GWF-net related to $N_2$ by an $\alpha$-morphism, then $N_1$ is well marked w.r.t. $\varphi$.
\newpage

\begin{proposition}\label{wfrefl}
	If $N_2$ is a GWF-net and $N_1$ is well marked w.r.t. $\varphi$, then $N_1$ is a GWF-net.
\end{proposition}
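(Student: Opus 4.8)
The plan is to take $m_f^1 = \{f \in P_1 \mid \post{f} = \varnothing\}$ as the candidate final marking and to verify the three structural conditions of Definition \ref{WF} for the net $(P_1, T_1, F_1, m_0^1, m_f^1)$. Whereas the proof of Proposition \ref{WFpreser} pushed these conditions forward along $\varphi$, here I would run the argument in the opposite direction, pulling the GWF-structure of $N_2$ back through the preimages $\varphi^{-1}(\cdot)$; well-markedness is expected to supply exactly the piece that the Remark and Fig. \ref{gwf_nonrefl} show is otherwise missing, namely the correct placement of the initial tokens.

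For Condition 1 I would prove the two inclusions separately. For ``sources are marked'', take $s \in P_1$ with $\pre{s} = \varnothing$ and set $p_2 = \varphi(s)$; then $s \in \inp{N_1(\varphi^{-1}(p_2))}$, so Def. \ref{alpha}.5b forces $\pre{p_2} = \varnothing$ (else $\pre{s}$ would be non-empty), whence $p_2 \in m_0^2$ because $N_2$ is a GWF-net, and since $s$ is an input place of a subnet refining a marked place, well-markedness yields $s \in m_0^1$. For ``marked places are sources'', take $s \in m_0^1$; by Def. \ref{alpha}.2 its image $p_2 = \varphi(s)$ lies in $m_0^2$, so $\pre{p_2} = \varnothing$. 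If $s \in \inp{N_1(\varphi^{-1}(p_2))}$, then Def. \ref{alpha}.5b again gives $\pre{s} = \varnothing$; the delicate case is $s$ internal to its subnet, where Def. \ref{alpha}.5d together with Def. \ref{alpha}.4 place all of $\pre{s}$ inside $\varphi^{-1}(p_2)$. Ruling out this case is the main obstacle. I would use that $N_1$ is safe and state-machine decomposable, together with the single-place behavior of a refining subnet (Lemma 1 of \cite{Bernardinello2013}), to argue that a token on an internal place $s$ would coexist with the token that well-markedness places on an input place of the same subnet, producing a second token in some sequential component and contradicting the single-token condition of an SMD net. This is precisely the failure mode depicted in Fig. \ref{gwf_nonrefl}. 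Non-emptiness of $m_0^1$ is immediate from $\varphi(m_0^1) = m_0^2 \neq \varnothing$.

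Condition 2 is the dual and is easier: $m_f^1 = \{f \mid \post{f} = \varnothing\}$ by definition, and it is non-empty because, taking $f_2 \in m_f^2$ (non-empty as $N_2$ is a GWF-net), any $o \in \outp{N_1(\varphi^{-1}(f_2))}$ satisfies $\varphi(\post{o}) = \post{f_2} = \varnothing$ by Def. \ref{alpha}.5c, so $\post{o} = \varnothing$ and $o \in m_f^1$. Finally, for Condition 3 I would lift, for each $x_1 \in X_1$, a path witnessing Condition 3 for $x_2 = \varphi(x_1)$ in $N_2$ back to $N_1$: a transition on the $N_2$-path is covered by the matching pre- and post-sets of its preimages (Def. \ref{alpha}.3), a place is refined by a connected acyclic subnet whose input and output places are glued to the surrounding flow by Def. \ref{alpha}.5b and \ref{alpha}.5c, and the source and sink endpoints of the lifted path are a marked input place (Condition 1) and an output sink place (Condition 2) of the relevant subnets. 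Here I would also appeal to the reflection of reachable markings and transition firings established in \cite{Bernardinello2013}, which realizes the lifted $N_2$-path as an actual path of $N_1$ through $x_1$, reducing Condition 3 to bookkeeping over the subnet boundaries.
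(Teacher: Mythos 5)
Your Conditions 1 and 2 are workable and close in spirit to the paper's argument: the paper simply reads off from well-markedness that $m_0^1=\{\inp{N_1(\varphi^{-1}(s_2))}\,\vert\, s_2 \in m_0^2\}$ and then shows these places have empty presets, so your ``delicate case'' of a marked internal place is dismissed there by the definition rather than by an argument. If you do want to exclude that case by hand, your sketch still has a hole: you must place the marked internal place and a marked input place of the \emph{same} subnet into a \emph{common} sequential component, which requires Def.~\ref{alpha}.5e (a sequential component through $s$ containing all of $\varphi^{-1}(\neighb{p_2})$) together with a backward walk inside the acyclic subnet (Def.~\ref{alpha}.5a); mere coverage of $N_1$ by sequential components does not force the two tokens into one component.

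The genuine gap is in Condition 3. You invoke ``the reflection of reachable markings and transition firings established in \cite{Bernardinello2013}'', but no such reflection is available under the present hypotheses: in this paper it is proved (Proposition \ref{markrefl}) only under the assumption that $N_1$ is a \emph{sound} GWF-net --- precisely what you cannot assume while still proving that $N_1$ is a GWF-net --- and the Remark following Proposition \ref{markpres} states that reflection fails in general, with Fig.~\ref{sound_nonrefl} giving a failure under exactly your hypotheses ($N_2$ a sound GWF-net, $N_1$ a GWF-net and hence well marked). Condition 3 is a statement about $F_1^*$ and must be established structurally, and doing so runs into the issue your proposal never mentions, which is the actual core of the paper's proof: cycles. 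A backward (or forward) walk in $N_1$ through subnets and transition preimages may get trapped in a cycle and never reach $m_0^1$; note also that Def.~\ref{alpha}.5b gives only the inclusion $\om{\pre{p}} \subseteq \pre{p_2}$, so you cannot steer the walk in $N_1$ along a chosen source-to-$\om{x_1}$ path of $N_2$ --- the walk in $N_1$ dictates its own image. The paper resolves this by observing that, by Def.~\ref{alpha}.5a, any cycle of $N_1$ maps onto a nontrivial cycle of $N_2$; since $N_2$ is a GWF-net, that cycle has a node with a predecessor outside the cycle, and pulling this predecessor back (Def.~\ref{alpha}.3 for transitions, Def.~\ref{alpha}.5b for places) yields an escape from the cycle in $N_1$, after which the argument recurses until a source place, marked by Condition 1, is reached. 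Finally, a forward lift of an $N_2$-path only reaches \emph{some} element of $\varphi^{-1}(x_2)$, not necessarily $x_1$; to pass through $x_1$ you need the backward walk inside the acyclic subnet starting from $x_1$ itself. Without the cycle-escape argument your proof of Condition 3 does not go through.
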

\begin{proof}
		We prove that $N_1$ satisfies three structural conditions of Def. \ref{WF}.
	
		\textbf{1.} By Def. \ref{WF}.1, $\forall s_2 \in m_0^2 \colon \pre{s_2}=\varnothing$. 
		Since $N_1$ is well marked w.r.t. $\varphi$, $m_0^1=\{\inp{N_1(\varphi^{-1}(s_2))}\,\vert\, s_2 \in m_0^2\}$. 
		Take $N_1(\varphi^{-1}(s_2))$ corresponding to $s_2 \in m_0^2$. Assume $\exists p \in \inp{N_1(\varphi^{-1}(s_2))}$, s.t. $\pre{p} \neq \varnothing$. Note that $\om{p} =s_2$. Then $\om{\pre{p}} = s_2$ (by Def. \ref{alpha}.4), and $p \notin \inp{N_1(\varphi^{-1}(s_2))}$. 
		
		\textbf{2.} By Def. \ref{WF}.2, $\forall f_2 \in m_f^2 \colon \post{f_2}=\varnothing$. Take $N_1(\varphi^{-1}(f_2))$ corresponding to $f_2 \in m_f^2$ and $p \in \outp{N_1(\varphi^{-1}(f_2))}$. Note that $\om{p} = f_2$. Assume $\post{p} \neq \varnothing$. Then $\om{\post{p}} = p_2$ (by Def. \ref{alpha}.4) and $p \notin \outp{N_1(\varphi^{-1}(f_2))}$. We obtain that $m_f^1 = \{\outp{N_1(\varphi^{-1}(f_2))}\,\vert\, f_2 \in m_f^2\}$ and $\forall f_1 \in m_f^1 \colon \post{f_1} = \varnothing$.
		
		\textbf{3. } Suppose $\exists x_1 \in X_1$, s.t. $\forall s_1 \in m_0^1 \colon (s_1, x_1) \notin F_1^*$. If $(x_1, x_1) \notin F_1^*$, we go backward through the path from $x_1$ to the first node $x_1' \in X_1$, s.t. $\pre{x_1'} = \varnothing$. 
		Since $\forall t \in T_1 \colon \abs{\pre{t}} \geq 1$, then $x_1' \in P_1$. If $x_1' \notin m_0^1$, then $N_1$ is not well marked w.r.t. $\varphi$. 
		If $(x_1, x_1) \in F_1^*$, we know by Def. \ref{alpha}.5a that there is a corresponding cycle in $N_2$.
		Take $x_2 \in X_2$, s.t. $\om{x_1} = x_2$. 
		By Def. \ref{WF}.3, $\exists s_2 \in m_0^2 \colon (s_2, x_2) \in F_2^*$. 
		Take $x_2' \in X_2$ belonging to this cycle, s.t. at least one element in $\pre{x_2'}$ is not in the cycle. Since $\varphi$ is surjective, $\exists x_1' \in X_1 \colon \om{x_1'} = x_2'$, belonging to the cycle $(x_1, x_1) \in F_1^*$. 
		If $x_2' \in T_2$, then $\varphi^{-1}(x_2') \subseteq T_1$. 
		By Def. \ref{alpha}.3, the neighborhood of transitions is preserved by $\varphi$.
		Then, $\forall t_1 \in \varphi^{-1}(x_2') \colon \om{\pre{t_1}} = \pre{x_2'}$, i.e. there must be a place in $\pre{\varphi^{-1}(x_2')}$ which is not in the cycle $(x_1, x_1) \in F_1^*$. 
		If $x_2' \in P_2$, then take $\inp{N_1(\varphi^{-1}(x_2'))}$. At least one place in $\inp{N_1(\varphi^{-1}(x_2'))}$ should have an input transition which is not in the cycle $(x_1, x_1) \in F_1^*$, since there exists an element in $\pre{x_2'}$ which is not in the cycle in $N_2$. We have shown that $\exists x \in \pre{x_1'}$, s.t. $x$ is not in the cycle $(x_1, x_1) \in F_1^*$. Now either there is a path from $x$ to $x^*$ in $N_1$, s.t. $\pre{x^*} = \varnothing$, or there is another cycle $(x^*, x^*) \in F_1^*$.
		
		By using a similar reasoning, we prove that $\forall x_1 \in X_1 \,\exists f_1 \in m_f^1 \colon (x_1, f_1) \in F_1^*$. The only difference is that we need to go forward through the paths.\qed 
\end{proof}

We recall in the following proposition that $\alpha$-morphisms preserve reachable markings and
firing of transitions. 
\newpage

\begin{proposition}[\cite{Bernardinello2013}]\label{markpres}
	Let $m_1 \in \reach{m_0^1}$. Then $\om{m_1} \in \reach{m_0^2}$. If $m_1 \reach{t}m_1'$, where $t \in T_1$, then:
	\begin{enumerate}
		\item $\om{t} \in T_2 \Rightarrow \om{m_1}\reach{\om{t}}\om{m_1'}$.
		\item $\om{t} \in P_2 \Rightarrow \om{m_1} = \om{m_1'}$.
	\end{enumerate}
\end{proposition}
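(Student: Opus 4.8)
The plan is to proceed by induction on the length $n$ of a firing sequence $w$ with $m_0^1 \reach{w} m_1$, proving simultaneously that $\om{m_1} \in \reach{m_0^2}$ and the two firing clauses for the last step. For the base case $n=0$ we have $m_1 = m_0^1$, and $\om{m_0^1} = m_0^2 \in \reach{m_0^2}$ by Def.~\ref{alpha}.2. For the inductive step, write $m_0^1 \reach{w'} m_1 \reach{t} m_1'$ with $\abs{w'} = n$; the induction hypothesis gives $\om{m_1} \in \reach{m_0^2}$, and it remains to handle the firing of $t$, splitting on whether $\om{t}$ is a transition or a place of $N_2$.

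If $\om{t} \in T_2$, I would first show $\om{t}$ is enabled at $\om{m_1}$: since $\pre{t} \subseteq m_1$ and, by Def.~\ref{alpha}.3, $\om{\pre{t}} = \pre{\om{t}}$, we get $\pre{\om{t}} = \om{\pre{t}} \subseteq \om{m_1}$. The crux is then to verify $\om{m_1'} = \om{m_1} - \pre{\om{t}} + \post{\om{t}}$, i.e. that projecting the concrete step reproduces exactly the abstract firing. Off $\neighb{\om{t}}$ nothing changes, because $t$ only moves tokens inside $\pre{t} \cup \post{t}$, which Def.~\ref{alpha}.3 maps into $\neighb{\om{t}}$. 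On $\pre{\om{t}}$ and $\post{\om{t}}$ I must argue that consuming the tokens of $\pre{t}$ un-marks exactly the places of $\pre{\om{t}}$ and that producing the tokens of $\post{t}$ marks exactly $\post{\om{t}}$. Granting this, $\om{m_1}\reach{\om{t}}\om{m_1'}$, which yields both $\om{m_1'} \in \reach{m_0^2}$ and clause~1.

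If $\om{t} \in P_2$, set $p_2 = \om{t}$. By Def.~\ref{alpha}.4 we have $\om{\neighb{t}} = \{p_2\}$, so every place of $\pre{t}$ and $\post{t}$ is mapped to the single place $p_2$. Firing $t$ only relocates tokens among these places, and since $\abs{\post{t}} \geq 1$ the subnet $N_1(\varphi^{-1}(p_2))$ stays occupied; hence $p_2$ remains marked and no other image place is affected, giving $\om{m_1'} = \om{m_1}$. This is clause~2, and $\om{m_1'} \in \reach{m_0^2}$ follows from the hypothesis with no new step in $N_2$.

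The main obstacle is the bookkeeping in the first case, caused by $\varphi$ not being injective: a single marked place $p_2 \in \om{m_1}$ is witnessed by a token somewhere in the refining subnet $N_1(\varphi^{-1}(p_2))$, so I must rule out ``spurious'' tokens of $m_1$ outside $\pre{t}$ that also map to $\pre{\om{t}}$ and would survive the firing, and dually ensure the subnet over an output place of $\om{t}$ was empty before. This is precisely what the local behavioural guarantees of $\alpha$-morphisms provide: by Lemma~1 of \cite{Bernardinello2013} a refining subnet behaves like a single place (at most one token inside; input transitions disabled while it is occupied; firing an output transition empties it), while the sequential-component covering of Def.~\ref{alpha}.5e places all transitions of $\varphi^{-1}(\neighb{p_2})$ into one sequential component, precluding concurrent activity around the subnet. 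Together these maintain, as an invariant of the induction, that $\om{m_1}$ is a safe marking with exactly one token per marked image place and that token movements project consistently, closing the argument.
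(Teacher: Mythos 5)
The paper does not actually prove this proposition: it is recalled, with a citation in its header, from \cite{Bernardinello2013}, and the paper defers entirely to that reference. So there is no in-paper proof to compare against; your proposal is in effect a reconstruction of the cited argument, and structurally it is the right one. The induction on the length of a firing sequence, the enabledness step $\pre{\om{t}} = \om{\pre{t}} \subseteq \om{m_1}$ via Def.~\ref{alpha}.3, the easy case $\om{t} \in P_2$ via Def.~\ref{alpha}.4 (with $\abs{\post{t}} \geq 1$ keeping the subnet occupied), and above all your identification of the real crux --- non-injectivity of $\varphi$, i.e.\ ruling out surviving ``spurious'' witnesses for places of $\pre{\om{t}}$ and stale tokens in subnets refining places of $\post{\om{t}}$ --- is exactly where the content lies. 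Your appeal to Lemma~1 of \cite{Bernardinello2013} (no input transition of a refining subnet is enabled while it is occupied; firing an output transition empties it) is precisely the tool that closes both halves of that difficulty, and it is the same lemma this paper itself invokes when explaining the intuition behind $\alpha$-morphisms.

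One caveat: your closing ``invariant'' is overstated. You claim the induction maintains that each marked place of $N_2$ is witnessed by \emph{exactly one} token in its refining subnet, attributing ``at most one token inside'' to Lemma~1. That is not what the lemma says, and it is false in general: Def.~\ref{alpha}.5e only requires each place of the subnet to lie on \emph{some} sequential component containing $\varphi^{-1}(\neighb{p_2})$, which still permits the subnet to hold several tokens concurrently (e.g.\ a refinement whose interior forks into two parallel branches rejoined by an output transition). Fortunately the overstatement is unnecessary: since $N_2$ is safe, $\om{m_1}$ is a set, and the projected firing equation only needs (a) subnets over places in $\post{\om{t}} \setminus \pre{\om{t}}$ to be empty beforehand, which follows from Lemma~1(a) plus the enabledness of $t$; (b) subnets over places in $\pre{\om{t}} \setminus \post{\om{t}}$ to be emptied by the firing, which is Lemma~1(b); and (c) places outside $\neighb{\om{t}}$ to be untouched, which follows from $\om{\neighb{t}} = \neighb{\om{t}}$ by Def.~\ref{alpha}.3. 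With the spurious invariant deleted and these three points substituted, your argument is sound and matches the intended proof of the cited result.
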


\begin{remark}[\cite{Bernardinello2013}]
	In the general case $\alpha$-morphisms do not reflect reachable markings, i.e. $m_2 \in \reach{m_0^2}$  and $m_2 \reach{t_2}$ do not imply that there exists $m_1 = \varphi^{-1}(m_2) \in \reach{m_0^1}$, s.t. $ \forall t \in \varphi^{-1}(t_2) \colon m_1\reach{t}$.
\end{remark}

To reflect reachable markings, we need to check local conditions as shown in \cite{Bernardinello2013}. For any place $p_2 \in P_2$, refined by a subnet of $N_1$, we construct a ``local'' net, denoted $S_2(p_2)$, of $N_2$ by taking the neighborhood transitions of $p_2$ plus an artificial input and output place if needed. At the same time, we construct the corresponding ``local'' net, denoted $S_1(p_2)$, of $N_1$ by taking the subnet of $N_1$ refining $p_2$, i.e. $N_1 (\varphi^{-1}(p_2))$, and the transitions $\varphi^{-1} (\pre{p_2})$, $\varphi^{-1} (\post{p_2})$ plus an artificial input and output place if needed.
The details are given in \cite{Bernardinello2013} (see Definition 9 there).

There is an $\alpha$-morphism $\varphi^S$ from $S_1(p_2)$ to $S_2(p_2)$ which is a restriction of $\varphi$ on the places and transitions of $S_1(p_2)$. 
In the following Lemma, taking the unfolding of $S_1(p_2)$, we assure that the final marking of the subnet enables the same set of transitions which are enabled by its image.
By Definition \ref{alpha}.5a, since $S_1(p_2)$ is acyclic, its unfolding is finite.


\begin{lemma}\label{unf}
	Let $\mathcal{U}(S_1(p_2))$ be the unfolding of $S_1(p_2)$ with the folding function $u$, and $\varphi^S$ be an $\alpha$-morphism from $S_1(p_2)$ to $S_2(p_2)$.  Let $N_1$ be a sound GWF-net. Then, the map from $\mathcal{U}(S_1(p_2))$ to $S_2(p_2)$ obtained as $\varphi^S \circ u$ is an $\alpha$-morphism.

\end{lemma}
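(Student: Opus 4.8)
The plan is to verify directly that $\psi = \varphi^S \circ u$ satisfies the five conditions of Definition \ref{alpha}, exploiting that $u$ is the folding of a branching process -- so it restricts to a bijection between $Min(\mathcal{U}(S_1(p_2)))$ and the initial marking and maps the preset and postset of every event bijectively onto the preset and postset of its image -- while $\varphi^S$ is already an $\alpha$-morphism. Before that I would record the standing hypotheses needed for the definition to apply: $\mathcal{U}(S_1(p_2))$ is the unfolding of a safe SMD net, hence itself safe, and it is covered by the sequential components obtained by lifting each sequential component of $S_1(p_2)$ to the occurrences of its places, so it is SMD; it is finite by acyclicity of $S_1(p_2)$ (Definition \ref{alpha}.5a), as already noted. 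Soundness of $N_1$ enters here too, to ensure that every place and transition of $S_1(p_2)$ is coverable and therefore occurs in the maximal branching process, which gives surjectivity of $u$, and hence of $\psi$, onto $S_2(p_2)$.

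Conditions \ref{alpha}.1--\ref{alpha}.4 then reduce to routine ``diagram chasing''. Surjectivity onto places and preservation of the initial marking follow from the corresponding branching-process properties of $u$ composed with \ref{alpha}.1--\ref{alpha}.2 for $\varphi^S$. For an event $e$ with image $t' = u(e)$, the bijections $u(\pre{e}) = \pre{t'}$ and $u(\post{e}) = \post{t'}$ composed with \ref{alpha}.3 (resp. \ref{alpha}.4) for $\varphi^S$ yield $\om{\pre{e}} = \pre{\om{e}}$ and $\om{\post{e}} = \post{\om{e}}$ when $\psi(e)$ is a transition (resp. $\om{\neighb{e}} = \{\om{e}\}$ when $\psi(e)$ is a place). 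Each of these is a one-line calculation once the two ingredients are lined up.

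The crux is Condition \ref{alpha}.5, to be checked for every place $q_2$ of $S_2(p_2)$, where the refining subnet is now $\psi^{-1}(q_2) = u^{-1}((\varphi^S)^{-1}(q_2))$, i.e. the occurrences in the unfolding of the subnet $(\varphi^S)^{-1}(q_2)$ of $S_1(p_2)$. Condition \ref{alpha}.5a is immediate since occurrence nets are acyclic. For \ref{alpha}.5b, \ref{alpha}.5c and \ref{alpha}.5d I would first establish the structural dictionary that $u$ sends input (resp. output, internal) places of $\psi^{-1}(q_2)$ to input (resp. output, internal) places of $(\varphi^S)^{-1}(q_2)$: an occurrence $b$ has an input event outside the subnet exactly when $u(b)$ has an input transition outside it, because $u$ preserves presets locally; one then transports \ref{alpha}.5b--d through $\varphi^S$, which already satisfies them for $(\varphi^S)^{-1}(q_2)$. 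For \ref{alpha}.5e I would lift the sequential component of $S_1(p_2)$ granted by \ref{alpha}.5e for $\varphi^S$: its occurrences form a single-token state machine in $\mathcal{U}(S_1(p_2))$, hence a sequential component containing the chosen condition, and since the neighborhood transitions of $q_2$ all lie on one single-token component of $S_1(p_2)$ their occurrences are pairwise non-concurrent, so they all fall inside this lifted component, yielding $\psi^{-1}(\neighb{q_2}) \subseteq T'$.

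I expect the genuine obstacle to be the \emph{equality} in Condition \ref{alpha}.5c: that every output place $b$ of the unfolded subnet realizes the entire postset of its image, $u(\post{b}) = \post{u(b)}$, and hence $\om{\post{b}} = \post{q_2}$. The folding only gives the inclusion $u(\post{b}) \subseteq \post{u(b)}$ for free, and a dead occurrence at the frontier of a branching process could in principle miss some output transitions. This is exactly where soundness of $N_1$ is indispensable: it guarantees that the local net can always reach the final marking of the subnet, so that in the maximal branching process every output-place occurrence is followed by occurrences of all output transitions of $q_2$ (using that firing an output transition empties the subnet, Lemma 1 of \cite{Bernardinello2013}, together with Proposition \ref{markpres}). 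Pinning down this behavioral argument -- that the maximal configurations of the unfolding correspond precisely to the subnet being emptied through each of its output transitions -- is the step I would spend the most care on.
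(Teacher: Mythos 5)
Your proposal is correct and takes essentially the same route as the paper: both arguments derive soundness of $S_1(p_2)$ from soundness of $N_1$ via the emptying property (Lemma 1 of \cite{Bernardinello2013}), use Def.~\ref{sound}.3 to conclude that every transition of $S_1(p_2)$ occurs in the unfolding, hence that the folding $u$ (and so $\varphi^S \circ u$) is surjective, and then conclude that the composite map is an $\alpha$-morphism. The paper stops at surjectivity and asserts the rest, so your explicit verification of conditions \ref{alpha}.1--\ref{alpha}.5 -- in particular the equality in \ref{alpha}.5c, where soundness is needed a second time to rule out frontier conditions missing output transitions -- simply fills in details the paper leaves implicit, using the same ingredients.
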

	\begin{proof}
	Since $N_1$ is a GWF-net, $S_1(p_2)$ is a GWF-net. By \cite{Bernardinello2013} (see Lemma 1 there), when a transition in $\varphi^{-1}(\post{p_2})$ fires, it empties the subnet $N_1(\varphi^{-1}(p_2))$. 
Then $S_1(p_2)$ is sound, and by Def. \ref{sound}.3 $\forall t \in T_1 \exists m \in \reach{m_0^1} \colon m\reach{t}$. So, each transition of $S_1(p_2)$ will occur at least once. Then, the folding $u$ is a surjective function from $\mathcal{U}(S_1(p_2))$ to $S_1(p_2)$, and $\varphi^S \circ u$ is an $\alpha$-morphism from $\mathcal{U}(S_1(p_2))$ to $S_2(p_2)$.
	\qed
	\end{proof}


As for the main results, we obtain that under the assumption of soundness of the GWF-net $N_1$, $\alpha$-morphisms not only preserve, but also reflect reachable markings and transition firings (see Proposition \ref{markrefl}). Moreover, $\alpha$-morphisms preserve soundness as shown in Proposition \ref{spres}.



\begin{proposition}\label{markrefl}
	If $N_1$ is a sound GWF-net, then $\forall m_2 \in \reach{m_0^2}\, \exists m_1 \in \reach{m_0^1} \colon$ $\varphi(m_1) = m_2$, and if $\exists t_2 \in T_2 \colon m_2\reach{t_2}$, then $\forall t_1 \in \varphi^{-1}(t_2) \colon m_1 \reach{t_1}$.
\end{proposition}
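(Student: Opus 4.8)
The plan is to prove both clauses simultaneously by induction on the length $n$ of a firing sequence $\sigma \in FS(N_2)$ with $m_0^2 \reach{\sigma} m_2$, carrying the invariant that the witnessing $m_1 \in \reach{m_0^1}$ satisfies $\om{m_1} = m_2$. For $n=0$ we have $m_2 = m_0^2$, and Definition \ref{alpha}.2 lets us take $m_1 = m_0^1$, so the invariant holds at the base. The substance of the argument is to show that, for any such $m_1$, one can first rearrange the tokens \emph{inside} the refined subnets so as to enable every preimage of every transition that $m_2$ enables, and then lift a single firing.

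For a fixed $m_2$ with witness $m_1$ ($\om{m_1}=m_2$), I would first drive $m_1$ into an ``output-ready'' configuration $\hat m_1$: for every place $p_2$ with $p_2 \in m_2$, advance the token(s) inside $N_1(\varphi^{-1}(p_2))$ onto the output elements $\outp{N_1(\varphi^{-1}(p_2))}$. By Definition \ref{alpha}.4, the transitions fired during this advancement are exactly those $t$ with $\om{t}=p_2$, whose neighborhood lies wholly inside $\varphi^{-1}(p_2)$; hence these firings are local and pairwise independent across distinct subnets, and by Proposition \ref{markpres}.2 they leave $\om{m_1}$ unchanged. That the advancement can be completed at all is where soundness enters: since $N_1$ is a sound GWF-net, each local net $S_1(p_2)$ is sound, so its tokens can always reach the output side. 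The resulting $\hat m_1 \in \reach{m_0^1}$ then still satisfies $\om{\hat m_1}=m_2$, and I claim it enables every $t_1 \in \varphi^{-1}(t_2)$ for every $t_2$ with $m_2 \reach{t_2}$. Indeed, $\om{\pre{t_1}} = \pre{t_2}$ by Definition \ref{alpha}.3, each place of $\pre{t_1}$ is an output element of a subnet refining a place of $\pre{t_2} \subseteq m_2$, and Definition \ref{alpha}.5c (every output element of a subnet reproduces the choices $\post{p_2}$ of its image) together with condition 5e (the boundary transitions $\varphi^{-1}(\neighb{p_2})$ lie in one sequential component) forces the output-ready configuration to mark precisely those output elements. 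Taking $m_1 := \hat m_1$ discharges both clauses of the statement for $m_2$.

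It remains to carry the induction to a successor. Given $m_2 \reach{t_2} m_2'$, fix any $t_1 \in \varphi^{-1}(t_2)$ and fire it from $\hat m_1$, obtaining $\hat m_1 \reach{t_1} m_1' \in \reach{m_0^1}$. Since $\om{t_1}=t_2 \in T_2$ and $\om{\hat m_1}=m_2$, Proposition \ref{markpres}.1 yields $m_2 \reach{t_2} \om{m_1'}$; as $N_2$ is safe and firing is therefore deterministic, $\om{m_1'}=m_2'$. This restores the invariant $\om{m_1'}=m_2'$ for the next step, closing the induction.

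The main obstacle is the enabling claim in the second paragraph: producing a \emph{single} image-preserving configuration $\hat m_1$ that enables \emph{all} of $\varphi^{-1}(t_2)$ at once, rather than only one preimage. This is exactly the point at which soundness is indispensable, since it is what makes each $S_1(p_2)$ sound and forces every internal run to terminate on the output elements; and it is where the unfolding analysis of Lemma \ref{unf} and the sequential-component condition 5e do the real work, by excluding internal configurations in which a token could stall and leave some preimage of $t_2$ unenabled. Once this is secured, the bookkeeping of the image marking reduces to Proposition \ref{markpres} and the determinism of firing in the safe net $N_2$.
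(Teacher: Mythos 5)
Your proof attempt is a genuinely different route from the paper's: the paper's own proof is essentially a citation, deriving the statement from Lemma \ref{unf} together with Proposition 5 of \cite{Bernardinello2013}, so that all the work of turning the local condition (each $\varphi^S \circ u$ is an $\alpha$-morphism) into reflection of markings and enabledness is delegated to the earlier paper. A self-contained induction would therefore be welcome --- but your central step is a genuine gap, and in fact the claim it rests on is false. You assert that a single ``output-ready'' configuration $\hat m_1$ with $\varphi(\hat m_1)=m_2$ enables \emph{every} $t_1 \in \varphi^{-1}(t_2)$ simultaneously, by Definition \ref{alpha}.5c and 5e. This breaks down precisely when the refinement \emph{splits} $t_2$ (which $\alpha$-morphisms explicitly allow, cf.\ Fig.~\ref{alpha_ex}(b)) into preimages attached to \emph{alternative} output places of a subnet. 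Concretely, let $N_2$ be the path $s \rightarrow u_2 \rightarrow p_2 \rightarrow t_2 \rightarrow f$, and let $N_1$ refine $p_2$ by the subnet $i \rightarrow a \rightarrow q_1$, $i \rightarrow b \rightarrow q_2$, with $q_1 \rightarrow t_1'$, $q_2 \rightarrow t_1''$ and $\varphi(t_1')=\varphi(t_1'')=t_2$. Every condition of Definition \ref{alpha} is satisfied and $N_1$ is a sound WF-net, yet the only reachable markings mapped to $\{p_2\}$ are $\{i\}$, $\{q_1\}$, $\{q_2\}$, and none of them enables both $t_1'$ and $t_1''$; indeed condition 5e forces all of $\varphi^{-1}(t_2)$ together with the places $q_1, q_2$ into one sequential component carrying a single token, so the two output places can never be marked together. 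Hence no advancement strategy can produce your $\hat m_1$: the enabling clause can only be established \emph{per preimage}, choosing for each fixed $t_1$ internal runs of the relevant subnets that terminate on the specific output places in $\pre{t_1}$ --- which is exactly what the unfolding-based argument behind Proposition 5 of \cite{Bernardinello2013} does.

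Your closing paragraph misdiagnoses the obstacle and then defers it rather than resolving it. The danger is not that ``a token could stall'' --- soundness does exclude stalling inside a subnet --- but that a token must \emph{choose a branch}, and soundness cannot exclude branching; that is what defeats the simultaneous-enabling claim. Saying that Lemma \ref{unf} and condition 5e ``do the real work'' is a placeholder, not an argument: Lemma \ref{unf} only asserts that $\varphi^S \circ u$ is an $\alpha$-morphism, and extracting reflection of markings and firings from that fact is precisely the content of the cited Proposition 5, which you neither reproduce nor invoke. There is also a secondary gap in the advancement step itself: soundness of $S_1(p_2)$ concerns runs from its own initial marking, and does not by itself show that the tokens deposited in $N_1(\varphi^{-1}(p_2))$ by an arbitrary global reachable marking of $N_1$ can be driven to output places by internal firings alone; relating the two needs Lemma 1 of \cite{Bernardinello2013}, as in the proof of Lemma \ref{unf}. (A small point: determinism of firing needs no safety, since $m\reach{t}m'$ determines $m'$ in any net.)
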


\begin{proof}
	Follows from Lemma \ref{unf} and from \cite{Bernardinello2013} (see Proposition 5 there).
\end{proof}


\begin{proposition}\label{spres}
	If $N_1$ is a sound GWF-net, then $N_2$ is a sound GWF-net.
\end{proposition}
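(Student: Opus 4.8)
The plan is to combine the structural result (Proposition \ref{WFpreser}) with the preservation and reflection of reachable markings (Propositions \ref{markpres} and \ref{markrefl}) in order to verify the three clauses of Definition \ref{sound} for $N_2$. First I would invoke Proposition \ref{WFpreser}: since the sound net $N_1$ is in particular a GWF-net, $N_2$ is a GWF-net, and by the remark following that proposition $\om{m_f^1} = m_f^2$. It then remains to check clauses \ref{sound}.1--\ref{sound}.3 for $N_2$, and I would treat clause 3, then clause 1, then clause 2.

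For clause 3 (no dead transitions), take $t_2 \in T_2$. Because $\varphi(P_1) = P_2$, no place of $N_1$ maps to $t_2$, so $\varphi^{-1}(t_2) \subseteq T_1$, and it is nonempty by surjectivity of $\varphi$. Pick $t_1 \in \varphi^{-1}(t_2)$. Soundness of $N_1$ (clause \ref{sound}.3) gives $m_1 \in \reach{m_0^1}$ with $m_1\reach{t_1}$; since $\om{t_1} = t_2 \in T_2$, Proposition \ref{markpres} yields $\om{m_1} \in \reach{m_0^2}$ with $\om{m_1}\reach{t_2}$, so $t_2$ is not dead.

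For clause 1 (option to complete), take any $m_2 \in \reach{m_0^2}$. Here I would use reflection: by Proposition \ref{markrefl} there is $m_1 \in \reach{m_0^1}$ with $\om{m_1} = m_2$. Soundness of $N_1$ (clause \ref{sound}.1) gives a firing sequence $m_1\reach{w}m_f^1$. Projecting $w$ along $\varphi$ --- keeping each transition whose image lies in $T_2$ and dropping those whose image lies in $P_2$, as licensed by the two cases of Proposition \ref{markpres} --- produces $w' \in T_2^*$ with $\om{m_1}\reach{w'}\om{m_f^1}$, i.e. $m_2\reach{w'}m_f^2$. Hence $m_f^2 \in \reach{m_2}$.

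Clause 2 (proper completion) is where I expect the real work. A direct transfer is tempting: given reachable $m_2$ with $m_f^2 \subseteq m_2$, reflect to $m_1$ with $\om{m_1} = m_2$ and apply clause \ref{sound}.2 of $N_1$. The obstacle is that $m_f^2 \subseteq \om{m_1}$ only guarantees a token somewhere inside each subnet $N_1(\varphi^{-1}(f_2))$, not necessarily on its output (final) place, so $m_f^1 \subseteq m_1$ need not hold. Rather than repair this by pushing tokens through these acyclic, output-transition-free subnets, I would argue directly in $N_2$ using its state-machine decomposition. For every sequential component with place set $P_C$, each transition adjacent to $P_C$ belongs to that component and hence has exactly one input and one output place in $P_C$; therefore $\sum_{p \in P_C} m(p)$ is invariant under firing and equals $1$ on every reachable marking. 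Since clause 1 makes $m_f^2$ reachable (it is reached from $m_0^2$), both $m_2$ and $m_f^2$ give sum $1$ on each $P_C$; combined with $m_f^2(p) \le m_2(p)$ for all $p$, this forces $m_f^2(p) = m_2(p)$ on each component, and coverage of $N_2$ by sequential components gives $m_2 = m_f^2$. This establishes clause 2 and completes the proof.
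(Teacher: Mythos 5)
Your proof is correct, and clauses 1 and 3 follow essentially the paper's own route: surjectivity of $\varphi$ plus Proposition \ref{markpres} for the absence of dead transitions, and reflection (Proposition \ref{markrefl}) followed by simulation of a completing run of $N_1$ for the option to complete. The genuine difference is clause 2. The paper also reflects the offending marking $m_2' \supseteq m_f^2$ back to a marking $m_1$ of $N_1$, fires a completing sequence there (clause 1 of $N_1$'s soundness), simulates it in $N_2$, and then closes with a token-flow contradiction: the extra tokens in $P' = m_2' \setminus m_f^2$ can only vanish by consuming a token from $m_f^2$, contradicting $\post{p} = \varnothing$ for $p \in m_f^2$. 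You instead argue entirely inside $N_2$, exploiting the standing assumption that $N_2$ is SMD safe: each sequential component induces an S-invariant (every adjacent transition has exactly one input and one output place in the component), every reachable marking carries exactly one token per component, and since $m_f^2$ is reachable (your clause 1 applied to $m_0^2$), the pointwise inequality $m_f^2 \subseteq m_2$ forces equality on each component, hence $m_2 = m_f^2$ by coverage. Your diagnosis of why a ``direct transfer'' fails (a token in $\varphi^{-1}(f_2)$ need not sit on the subnet's output place, so $m_f^1 \subseteq m_1$ can fail) is exactly the reason the paper, too, avoids invoking clause 2 of $N_1$'s soundness. What your route buys: it replaces the paper's somewhat informal token-flow step with a standard, fully rigorous invariant argument, and for clause 2 it needs nothing from $N_1$ beyond the already-established clause 1. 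What the paper's route buys: it stays uniformly within the morphism machinery of preservation and reflection, which is the organizing theme of the section.
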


	\begin{proof}
		We prove that $N_2$ satisfies three behavioral conditions of Def. \ref{sound}.
				
		\textbf{1.} By Def. \ref{sound}.1, $\forall m_1 \in \reach{m_0^1} \colon m_f^1 \in \reach{m_1}$. Then, $\exists w \in T_1^* \colon m_1 \reach{w} m_f^1$, i.e. $w=t_1t_2\dots t_n$ and $m_1\reach{t_1}m_1^1\dots m_1^{n-1}\reach{t_n}m_f^1$. By Prop. \ref{markpres}, we can simulate $w$ on $N_2$, and by Prop. \ref{WFpreser}, $\om{m_f^1} = m_f^2$. Now assume $\exists m_2 \in \reach{m_0^2} \colon m_f^2 \notin \reach{m_2}$. By Prop. \ref{markrefl}, $\exists m_1' \in \reach{m_0^1} \colon \varphi^{-1}(m_2)=m_1'$. By Def. \ref{sound}.1, $m_f^1 \in \reach{m_1'}$, i.e. $\exists w' \in T^* \colon m_1' \reach{w'} m_f^1$. By Prop. \ref{markpres}, we again simulate $w'$ on $N_2$. Then $m_f^2 \in \reach{m_2}$.
		
		\textbf{2.}	Suppose $\exists m_2' \in \reach{m_0^2} \colon m_2' \supseteq m_f^2$. Then we can write $m_2' = m_f^2 \cup P'$, where $\forall p \in P' \colon p \notin m_f^2$. By Prop. \ref{markrefl}, take $m_1 \in \reach{m_0^1}$, s.t. $\varphi^{-1}(m_1) = m_2'$ and $m_f^1 \not\subseteq m_1$. By Def. \ref{sound}.1, $m_1^f \in \reach{m_1}$, i.e. $\exists w \in T_1^* \colon m_1 \reach{w}m_1^f$. By Prop. \ref{markpres}, we simulate $w$ on $N_2$. By Prop. \ref{WFpreser}, $\om{m_f^1} = m_f^2$. The only way to completely empty places in $P'$ is to consume at least one token from $m_f^2$. Then $\exists p \in m_f^2 \colon \post{p} \neq \varnothing$ which is a contradiction.
				
		\textbf{3.} By Def. \ref{sound}.3, $\forall t_1 \in T_1 \,\exists m_1 \in \reach{m_0^1} \colon m_1 \reach{t_1}$. The map $\varphi$ is surjective, i.e. $\forall t_2 \in T_2 \colon \exists t_1 \in T_1 \colon \om{t_1} = t_2$. By Prop. \ref{markpres}, $m_1\reach{t_1}m_1' \Rightarrow \om{m_1}\reach{\om{t_1}}\om{m_1'}$. Then, $\forall t_2 \in T_2 \, \exists m_2 \in \reach{m_0^2} \colon m_2 \reach{t_2}$. \qed
	\end{proof}

\begin{remark}
	In the general case the converse of Proposition \ref{spres} is not true. Consider the example shown in Fig. \ref{sound_nonrefl}, where $N_2$ is sound and $N_1$ is not sound, since transitions $y_1$ and $y_2$ are dead. Thus, $\alpha$-morphisms do not reflect soundness. Note that reachable markings are also not reflected in this example. However, in the next section we will provide conditions under which soundness is reflected.
\end{remark}

\section{From Event Logs to Structured and Sound Models of Multi-Agent Systems}

In this section, we present our approach to process discovery by composing individual agent models through $\alpha$-morphisms. An \emph{event log} $L$ is a finite multiset of finite non-empty sequences (\emph{traces}) over a set of \emph{observable} actions $\act$.

%

We assume to have an event log $L$ of two interacting agents. For instance, we will further work with the same event log obtained from the MAS shown in Fig. \ref{exintro}(a) which we have used in Section 1. We assume to know what actions are executed by which agent, $\act = A_1 \cup A_2$, s.t. $A_1 \cap A_2 = \varnothing$. Also, we assume to know actions corresponding to their asynchronous ``message-passing'' interaction.

Instead of discovering the model directly from $L$, we propose to filter the log according to the agent actions $A_1$ and $A_2$ producing two new logs $L_1$ and $L_2$. Traces of $L_1$ and $L_2$ contain actions done only by a corresponding agent. By using, for example, \emph{inductive miner} \cite{Inductive13}, from $L_1$ and $L_2$ we obtain two GWF-nets $N_1'$ and $N_2'$ modeling the two agents. By construction, $N_1'$ and $N_2'$ are \emph{well-structured} and sound, which implies that they are state machine decomposable (see Corollary~4 in \cite{Aalst00}). Well-structured models are recursively built from blocks representing basic control flow constructs, e.g. choice, concurrency or cycle.

It is possible to compose $N_1'$ and $N_2'$ using the channel-composition as in Definition \ref{netplus} obtaining a new GWF-net $N_1' \oplus_{P_c} N_2'$, where channels are defined according to the specification of agent interaction we have assumed to know.
However, it is obvious that $N_1' \oplus_{P_c} N_2'$ might not be sound. 
In order to avoid the verification of $N_1' \oplus_{P_c} N_2'$, 
we apply the following approach 
to get its soundness \emph{by construction}.

We can abstract the discovered nets $N_1'$ and $N_2'$ w.r.t. interacting actions, thus producing two GWF-nets $N_1$ and $N_2$, s.t. there is an $\alpha$-morphism $\varphi_i \colon N_i' \to N_i$ for $i=1, 2$. 
According to Proposition \ref{spres}, $N_i$ is sound.
The abstract models can also be composed via the same channels obtaining $N_1 \oplus_{P_c} N_2$. 
This abstract model represents the interaction protocol between the agents. 
Obviously, $N_1 \oplus_{P_c} N_2$ is less complex than $N_1' \oplus_{P_c } N_2'$, and its soundness can be much easier verified.

Given $N_1 \oplus_{P_c} N_2$ and the two $\alpha$-morphisms $\varphi_i \colon N_i' \to N_i$ for $i=1, 2$, we can construct two new GWF-nets: $N_1' \oplus_{P_c} N_2$ and $N_1 \oplus_{P_c} N_2'$, which actually represent different abstractions of the same MAS. This construction is formally defined in the following definition. We also show in Remark \ref{channelalpha} that there is an $\alpha$-morphism from $N_1' \oplus_{P_c} N_2$ (by symmetry, from $N_1 \oplus_{P_c} N_2'$) towards $N_1 \oplus_{P_c} N_2$. 

\begin{definition}\label{refchancomp}
	Let $N_i = (P_i, T_i, F_i, m_0^i, m_f^i)$ be a GWF-net for $i=1, 2$, and $P_c$ be a set of channels. 
	Let $N_1 \oplus_{P_c} N_2= (P, T, F, m_0, m_f)$ be a channel-composition of $N_1$ and $N_2$. Let $N_1' = (P_1', T_1', F_1', m'_{01}, m'_{f1})$ be a GWF-net, s.t. there is an $\alpha$-morphism $\varphi_1 \colon N_1' \to N_1$. Construct $N_1' \oplus_{P_c} N_2 = (P', T', F', m_0', m_f')$, where:
	\vspace{-0.1cm}
	\begin{enumerate}
		\item $P' = P_1' \cup P_2 \cup P_c$.
		\item $T' = T_1' \cup T_2$.
		\item $F' = F_1' \cup F_{N_2}^c \cup F_{N_1'}^c$, where:
		\begin{enumerate}
			\item $F_{N_2}^c = F \cap [((P_c \cup P_2)\times T_2) \cup (T_2 \times (P_c \cup P_2))]$;
			\item $F_{N_1'}^c \subseteq (P_c \times T_1') \cup (T_1' \times P_c)$.
		\end{enumerate}
		\item $m_0' = m'_{01} \cup m_0^2$ and $m_f' = m'_{f1} \cup m_f^2$.
		\item $\forall p \in P_c$, $\forall t\in T_1 \colon$ \\ 
		$((t, p) \in F \Rightarrow \varphi^{-1}(t) \times \{p\} \subseteq F_{N_1'}^c)$ and 
		$((p, t) \in F \Rightarrow \{p\} \times \varphi^{-1}(t)\subseteq F_{N_1'}^c)$.
	\end{enumerate}
\end{definition}

\begin{remark}\label{channelalpha}
	Let $N_1$, $N_1'$ and $N_2$ be GWF-nets, and $\varphi_1 \colon N_1' \to N_1$ be an $\alpha$-morphism. 
	Then there is an $\alpha$-morphism $\varphi'_1 \colon N_1' \oplus_{P_c} N_2 \to N_1 \oplus_{P_c} N_2$. 
	In fact, by construction, $\varphi'_1$ is given by $\varphi_1$ plus the identity mapping of places and transitions of $N_2$ together with the identity mapping of channel places.
\end{remark}

\begin{example}
	Here we consider as $N_1 \oplus_{P_c} N_2$ the GWF-net shown in Fig.\ref{alpha_ex}(a). We can refine it by models $N_1'$ or $N_2'$, discovered from filtered event logs $L_1$ and $L_2$ as shown in Fig. \ref{N'oplusN}. The $\alpha$-morphisms between the discovered models and the abstract ones are indicated by the shaded ovals and the transition labeling. The $\alpha$-morphism between $N_2'$ and $N_2$ has also been shown in Fig. \ref{alpha_ex}(b).
\end{example}

As for the main result, we will prove in Proposition \ref{main} that an $\alpha$-morphism from $N_1' \oplus_{P_c} N_2$ (by symmetry, from $N_1 \oplus_{P_c} N_2'$) to $N_1 \oplus_{P_c} N_2$ reflects the soundness of $N_1 \oplus_{P_c} N_2$. To prove this fact, we will use the property of reachable markings of a channel-composition stated in the following Lemma.

\begin{lemma}\label{markdecomp}
	Let $N_i = (P_i, T_i, F_i, m_0^i, m_f^i)$ a GWF-net for $i=1, 2$, $N_1 \oplus_{P_c} N_2$ $= (P, T, F, m_0, m_f)$.
	Then, $\forall m \in \reach{m_0} \colon m = m_1 \cup m_c \cup m_2$, where $m_1 \in \reach{m_0^1}, m_2 \in \reach{m_0^2}, \text{and } m_c \subseteq P_c$. 
\end{lemma}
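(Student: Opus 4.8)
The plan is to argue by induction on the length $n$ of a firing sequence $w$ with $m_0 \reach{w} m$. For the base case $n=0$ we have $m = m_0 = m_0^1 \cup m_0^2$ by Def.~\ref{netplus}.4, so the decomposition holds with $m_1 = m_0^1 \in \reach{m_0^1}$, $m_2 = m_0^2 \in \reach{m_0^2}$ and $m_c = \varnothing \subseteq P_c$.

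For the inductive step, suppose the claim holds for a reachable $m$, say $m = m_1 \cup m_c \cup m_2$ with $m_1 \in \reach{m_0^1}$, $m_2 \in \reach{m_0^2}$, $m_c \subseteq P_c$, and let $m \reach{t} m'$ for some $t \in T = T_1 \cup T_2$. Using commutativity of $\oplus_{P_c}$ (Remark~\ref{netplusprop}), it suffices to handle the case $t \in T_1$. The structural key is Def.~\ref{netplus}.3: since $F = F_1 \cup F_2 \cup F_c$ with $F_c \subseteq (P_c \times T)\cup(T \times P_c)$, the neighborhood $\neighb{t}$ of a transition $t\in T_1$ is contained in $P_1 \cup P_c$ and meets no place of $P_2$. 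Hence firing $t$ does not alter the $P_2$-component, and I set $m_2' = m_2 \in \reach{m_0^2}$.

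It then remains to follow the $P_1$- and $P_c$-parts. Because $F \cap ((P_1\times T_1)\cup(T_1\times P_1)) = F_1$, the sets $\pre{t}\cap P_1$ and $\post{t}\cap P_1$ coincide with the pre- and postset of $t$ inside $N_1$. From $\pre{t}\subseteq m$ I obtain $\pre{t}\cap P_1 \subseteq m_1$, so $t$ is enabled at $m_1$ in $N_1$ and the induced step $m_1 \reach{t} m_1'$ is a genuine firing of $N_1$; thus $m_1' \in \reach{m_0^1}$. The channel tokens move only among places of $P_c$, so the $P_c$-component $m_c'$ stays supported on $P_c$. Collecting the three pieces gives $m' = m_1' \cup m_c' \cup m_2'$ of the required shape, which closes the induction.

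The one point I expect to require care is the claim that the $P_1$-restriction of a composite firing is a legitimate firing of $N_1$: one must know that the channels neither contribute to nor consume tokens on $P_1$ and do not affect the enabledness of $t$ restricted to $P_1$. This is exactly what Def.~\ref{netplus}.3 secures, since every channel arc lies in $(P_c \times T)\cup(T\times P_c)$ and therefore never connects a transition to a place of $P_1$ or $P_2$. Everything else is routine bookkeeping, and the symmetric case $t\in T_2$ is identical after the commutativity reduction.
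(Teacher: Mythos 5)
Your proof is correct and follows essentially the same route as the paper: the paper simply asserts that restricting a firing sequence $w$ of $N_1 \oplus_{P_c} N_2$ to $T_1$ and $T_2$ yields firing sequences of $N_1$ and $N_2$ reaching the components of $m$, and your induction on the length of $w$ (using that every $t \in T_i$ has its neighborhood inside $P_i \cup P_c$, so enabledness in the composition implies enabledness in $N_i$ and the $P_j$-part, $j \neq i$, is untouched) is exactly the detailed verification of that assertion.
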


\begin{proof}
	By Def. \ref{netplus}.4, $m_0 = m_0^1 \cup m_0^2$ and $m_f = m_f^1 \cup m_f^2$. Take $m \in \reach{m_0}$, then $\exists w \in FS(N_1 \oplus_{P_c} N_2) \colon m_0\reach{w}m$, where $w \in T^*$. 
	By Def. \ref{netplus}.2, $T=T_1 \cup T_2$. Restricting $w$ to $T_1$ and $T_2$ produces two firing sequences of $N_1$ and $N_2$ leading from $m_0^1$ and $m_0^2$ to the reachable markings $m_1$ and $m_2$ which constitute $m$. \qed
\end{proof}

\begin{figure}[H]
	\centering
	\vspace{-0.7cm}
	\subfigure[$N_1' \oplus_{P_c} N_2$]{\includegraphics[height=5.5cm]{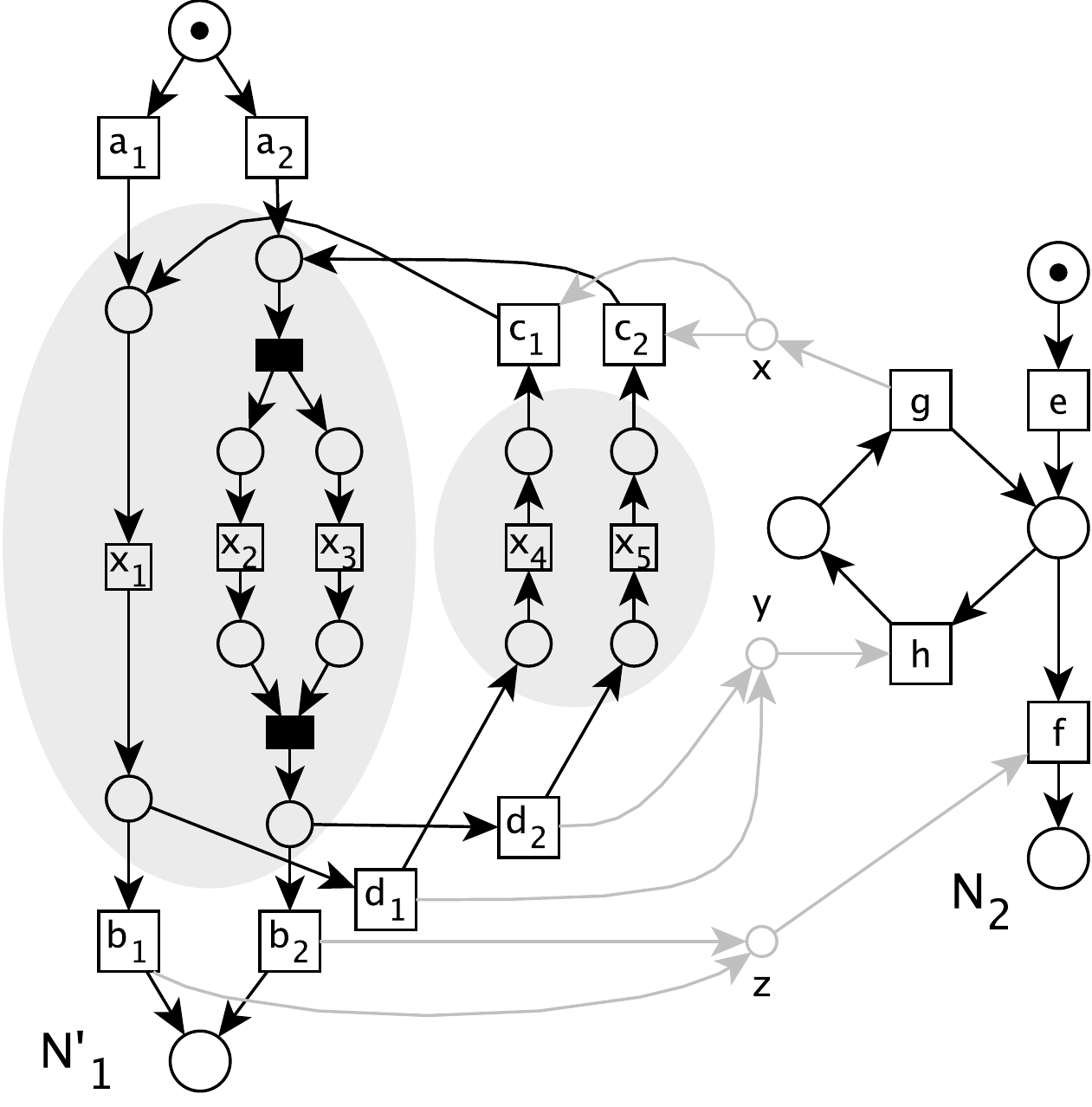} \label{channelcompref}}\hspace{0.5cm}
	\subfigure[$N_1 \oplus_{P_c} N_2'$]{\includegraphics[height=5.52cm]{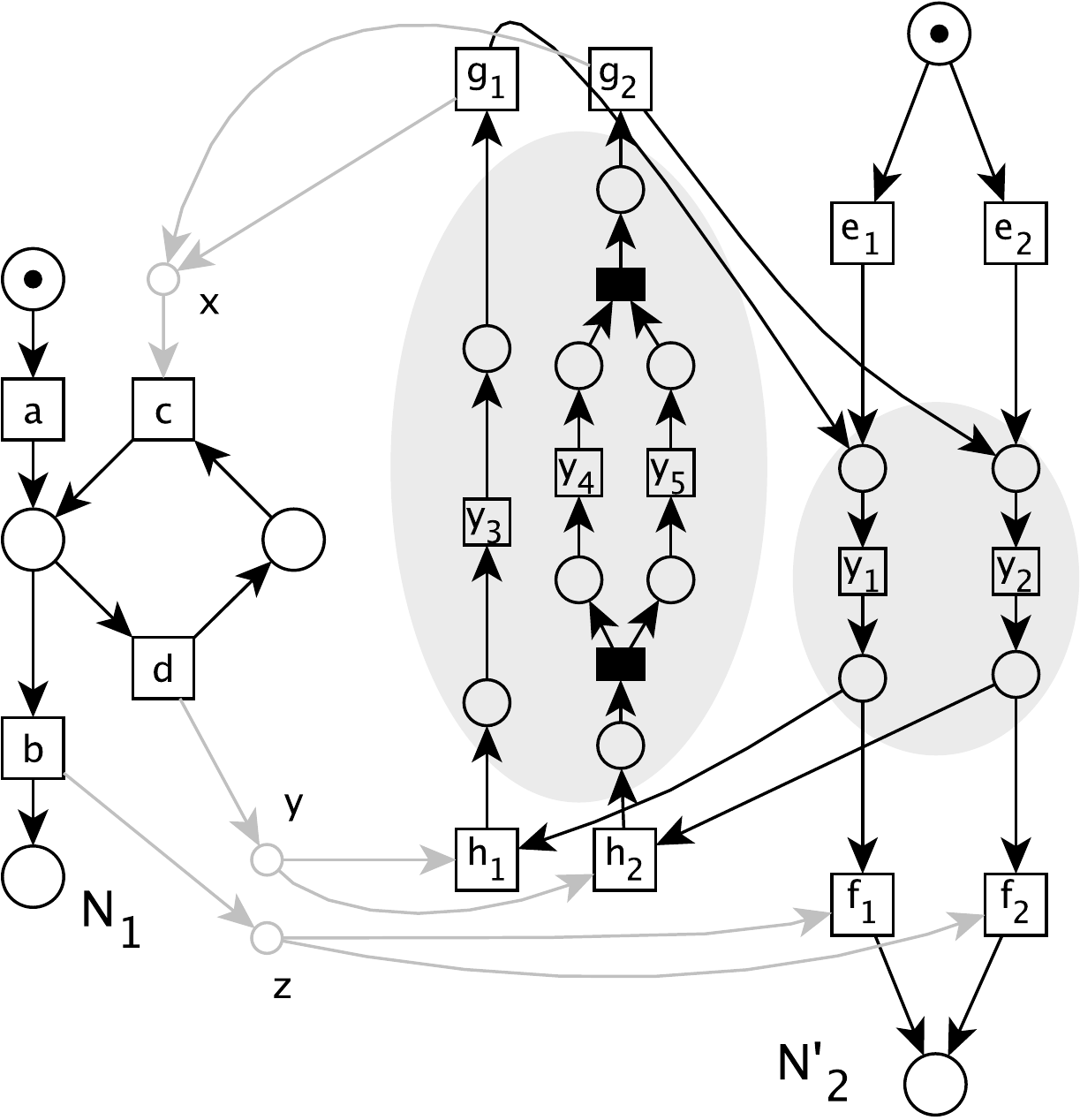} \label{channelcompref2}}
	\caption{Refining $N_1 \oplus_{P_c} N_2$ with agent nets $N_1'$ and $N_2'$ discovered from filtered logs}
	\label{N'oplusN}
	\vspace{-0.4cm}
\end{figure}

\begin{proposition}\label{main}
	Let $N_1, N_1'$ and $N_2$ be sound GWF-nets, and $\varphi_1 \colon N_1' \to N_1$ be an $\alpha$-morphism. If $N_1 \oplus_{P_c} N_2$ is sound, then $N_1' \oplus_{P_c} N_2$ is sound.
\end{proposition}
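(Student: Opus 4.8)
The plan is to exploit the $\alpha$-morphism $\varphi_1' \colon N_1' \oplus_{P_c} N_2 \to N_1 \oplus_{P_c} N_2$ provided by Remark \ref{channelalpha}, which acts as $\varphi_1$ on the refined component and as the identity on $N_2$ and on the channels $P_c$. Write $\hat{N} = N_1' \oplus_{P_c} N_2$ and $N = N_1 \oplus_{P_c} N_2$; both are GWF-nets by Proposition \ref{closed}. The forward direction is free: by Proposition \ref{markpres}, every reachable marking and every firing of $\hat{N}$ projects along $\varphi_1'$ to a reachable marking and a firing of $N$. Since Proposition \ref{spres} only \emph{preserves} soundness and the remark after it shows soundness is not reflected in general, the whole difficulty is to transport the soundness of $N$ back to $\hat{N}$; the extra hypothesis that $N_1'$ is sound is exactly what unlocks reflection, through Proposition \ref{markrefl} applied to $\varphi_1$. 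I would verify the three conditions of Definition \ref{sound} for $\hat{N}$.

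I would dispatch Definition \ref{sound}.2 first, as it needs no reflection. By Lemma \ref{markdecomp} every reachable $\hat{m}$ of $\hat{N}$ splits as $\hat{m} = m_1' \cup m_c \cup m_2$ with $m_1' \in \reach{m'_{01}}$ in $N_1'$, $m_2 \in \reach{m_0^2}$ in $N_2$, and $m_c \subseteq P_c$. If $\hat{m} \supseteq m_f' = m'_{f1} \cup m_f^2$, then $m_1' \supseteq m'_{f1}$ and $m_2 \supseteq m_f^2$, so soundness of $N_1'$ and of $N_2$ force $m_1' = m'_{f1}$ and $m_2 = m_f^2$; projecting $\hat{m}$ by $\varphi_1'$ gives a reachable marking of $N$ containing $m_f$, so soundness of $N$ forces $m_c = \varnothing$. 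Hence $\hat{m} = m_f'$.

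The core of the argument is a backward-lifting lemma: every reachable marking of $N$ is the $\varphi_1'$-image of a reachable marking of $\hat{N}$, with the firing of preimage transitions reflected. I would prove this by lifting a firing sequence of $N$ step by step, maintaining the invariant that the current marking $\hat{m}$ of $\hat{N}$ projects onto the current marking $m$ of $N$. A $T_2$-transition is fired directly in $\hat{N}$, since $\varphi_1'$ is the identity on $N_2$ and on channels, so its enabling and its effect coincide in both nets. For a $T_1$-transition $t$, I first drive the $N_1'$-component, using only internal transitions (those mapped by $\varphi_1$ to a place), into a marking that enables some $t' \in \varphi_1^{-1}(t)$; such internal firings keep $\varphi_1'(\hat{m})$ fixed by Proposition \ref{markpres}.2 and, crucially, do not touch the channels, because by Definition \ref{refchancomp}.5 only preimages of $T_1$-transitions are attached to $P_c$. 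That the required internal completion always exists is exactly the content of the local reflection delivered by Proposition \ref{markrefl}, Lemma \ref{unf} and Lemma 1 of \cite{Bernardinello2013}, which confine the behaviour of each refining subnet; firing $t'$ then advances $\hat{m}$ and $m$ consistently, the channel effect matching by Definition \ref{refchancomp}.5.

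With this lemma in hand, the remaining two conditions follow. For proper completion (Definition \ref{sound}.1), given reachable $\hat{m}$ I project to $m$, use soundness of $N$ to reach $m_f$, and lift that sequence back to $\hat{N}$; a final completion of the internal transitions of the subnets refining $m_f^1$ (again channel-free) brings the $N_1'$-component to $m'_{f1}$, so the run ends exactly at $m_f'$. For the absence of dead transitions (Definition \ref{sound}.3), a $T_2$- or channel-attached $T_1'$-transition is covered by lifting a witness run of $N$, while an internal $T_1'$-transition is covered by lifting a reachable marking of $N$ and then firing the internal witness given by soundness of $N_1'$. The main obstacle is the backward-lifting lemma, and within it the delicate point is maintaining the marking correspondence across channel synchronizations: one must argue that the internal completion of the $N_1'$-subnets can always be performed without consuming or producing channel tokens and without blocking the $N_2$-side, which rests on the structural confinement of refining subnets guaranteed by the $\alpha$-morphism conditions.
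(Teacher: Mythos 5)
Your proposal is correct and follows essentially the same route as the paper's proof: it uses the $\alpha$-morphism $\varphi_1'$ from Remark \ref{channelalpha}, the marking decomposition of Lemma \ref{markdecomp}, preservation (Prop. \ref{markpres}) for the forward projection, and reflection (Prop. \ref{markrefl}, unlocked by soundness of $N_1'$) to lift firing sequences of $N_1 \oplus_{P_c} N_2$ back, interleaving directly-executed $T_2$-segments with lifted $T_1$-transitions and channel-free internal completions, exactly as the paper does when it factors $w = w_2^1 t_1^1 w_2^2 t_1^2 \dots$ and invokes Def. \ref{refchancomp}.5. Your component-wise treatment of condition 2 (using soundness of $N_1'$ and $N_2$ before projecting to kill the channel residue) is a minor, arguably cleaner, variant of the paper's projection argument, but the overall strategy is the same.
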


\begin{proof}
	By Rem. \ref{channelalpha}, there is an $\alpha$-morphism $\varphi_1'$ from $N'_1\oplus_{P_c} N_2$ to $N_1 \oplus_{P_c} N_2$.
	Assume $N_1' \oplus_{P_c} N_2 = (P', T', F', m_0', m_f')$ and $N_1 \oplus_{P_c} N_2 = (P, T, F, m_0, m_f)$.
	We prove that $N'_1 \oplus_{P_c} N_2$ satisfies the three conditions of soundness of Def. \ref{sound}. 
	
		\textbf{1.} Take $m' \in \reach{m_0'}$. 
		By Lemma \ref{markdecomp}, $m' = m_1' \cup m_2 \cup m_c$.
		By Prop. \ref{markpres} for $\varphi'_1$, $\varphi'_1(m') = m \in \reach{m_0}$. 
		By Lemma \ref{markdecomp}, $m=m_1 \cup m_2 \cup m_c$, where $m_2$ and $m_c$ are the same as in $m'$ and $\varphi_1(m_1') = m_1$ (by Prop. \ref{markpres} for $\varphi_1$).
		Since $N_1 \oplus_{P_c} N_2$ is sound, $\exists w \in FS(N_1 \oplus_{P_c} N_2) \colon m\reach{w}m_f$, where $w \in T^*$.
		It is possible to write $w=w_2^1v$, where $v=\epsilon$ or $v=t_1^1w^2_2t_1^2\dots$ with $w_2^i \in T_2^*$ and $t_1^i \in T_1$, s.t. $i \geq1$.
		Then $w_2^i$ can be obviously executed on the component $N_2$ of $N_1'\oplus_{P_c} N_2$ as well, because $\varphi_1'$ reflects the connection to channels (by Def. \ref{refchancomp}.5 and Rem. \ref{channelalpha}).
		Since 
		$N_1'$ is sound, $\varphi_1$ reflects reachable markings and firing of transitions (by Prop.~\ref{markrefl}) between $N_1'$ and $N_1$.
		Thus, there is a reachable marking $m'_{1i}$ in $N_1'$, belonging to $\varphi_1^{-1}(m_1^i)$ in $N_1$, s.t. if $m_1^i \reach{t_1^i}$ in $N_1$, then any transition in $\varphi_1^{-1}(t_1^i)$ is enabled at $m'_{1i}$ in $N_1'$.
		Moreover, these transitions are enabled in $N_1' \oplus_{P_c} N_2$, because $\varphi_1'$ reflects connection to channels (by Def. \ref{refchancomp}.5 and Rem. \ref{channelalpha}).
		Hence, the sequence $w \in FS(N_1 \oplus_{P_c} N_2)$ is reflected in $N_1' \oplus_{P_c} N_2$ reaching its final marking $m_f'$.
		
		\textbf{2.} Suppose by contradiction $\exists m' \in \reach{m_0'} \colon m' \supseteq m_f'$ and $m' \neq m_f'$. 
		By Def. \ref{refchancomp}, $m_f' = m_{f1}' \cup m_f^2$. 
		Then $m' = m_{f1}' \cup m_f^2 \cup m_3$.
		By Prop. \ref{markpres} for $\varphi_1'$, $\varphi'_1(m') \in \reach{m_0}$.
		In this way, $\varphi_1'(m') = \varphi_1'(m_{f1}') \cup \varphi_1'(m_f^2) \cup \varphi_1'(m_3) = \varphi_1(m_{f1}') \cup m_f^2 \cup m_3 = m_f^1 \cup m_f^2 \cup m_3 = m_f \cup m_3$ (by Rem. \ref{channelalpha} and Def. \ref{refchancomp}.4).
		Thus, this marking strictly covers the final marking $m_f$ of $N_1 \oplus_{P_c} N_2$ which contradicts its soundness.
		
		\textbf{3.} We prove that $\forall t' \in T' \,\exists m' \in \reach{m_0'} \colon m'\reach{t'}$.
		By Lemma \ref{markdecomp}, $m'=m_1'\cup m_2 \cup m_c$.
		By Def. \ref{refchancomp}.2, $\forall t' \in T' \colon t' \in T_1'$ or $t' \in T_2$. 
		If $t' \in T_2$, then since $N_1 \oplus_{P_c} N_2$ is sound, $\exists m \in \reach{m_0}$, s.t. $m \reach{t'}$. 
		By Rem. \ref{channelalpha} and by Def. \ref{refchancomp}.5, $m_2\cup m_c$ enables $t'$. 
		If $t' \in T_1'$, then there are two cases. 
		If $\varphi_1'(t') \in P$, then $t'$ is not connected to channels.
		Since $N_1'$ is sound, $m'_1 \subseteq m'$ enables $t'$.
		If $\varphi_1'(t') \in T$, then take $t \in T$, s.t. $\varphi_1'(t') = t$. 
		Since $N_1 \oplus_{P_c} N_2$ is sound, $\exists m \in \reach{m_0} \colon m \reach{t}$. 
		By Prop.~\ref{markrefl}, $\varphi_1$ reflects reachable markings and firings of transitions on $N_1'$. 
		Moreover, $\varphi_1'$ reflects connection to channels (by Def. \ref{refchancomp}.5 and Rem. \ref{channelalpha}). 
		Then $m_1' \cup m_c$ enables\,$t'$.   		 \qed



\end{proof}


Since there are two $\alpha$-morphisms from $N_1 \oplus_{P_c} N_2'$ and $N_1' \oplus_{P_c} N_2$ towards $N_1 \oplus_{P_c} N_2$, we can compose them by using the composition defined in \cite{Bernardinello2013} (see Definition 12) obtaining as a result $N$ with other two $\alpha$-morphisms from $N$ towards $N_1 \oplus_{P_c} N_2'$ and $N_1' \oplus_{P_c} N_2$, s.t. the diagram shown in Fig. \ref{final}(b) commutes.

Alternatively, by applying a similar construction as the one given in Definition \ref{refchancomp}, we can refine $N_2$ in $N_1' \oplus_{P_c} N_2$ by $N_2'$ obtaining $N_1' \oplus_{P_c} N_2'$ which is isomorphic to the previously obtained composition $N$. Symmetrically,  it is possible to refine $N_1$ in $N_1 \oplus_{P_c} N_2'$ by $N_1'$ coming to the same result. According to Remark \ref{channelalpha}, there are $\alpha$-morphisms from $N_1' \oplus_{P_c} N_2'$ to $N_1' \oplus_{P_c} N_2$ and to $N_1 \oplus_{P_c} N_2'$. 
According to Proposition \ref{main}, $N_1' \oplus_{P_c} N_2$ (by symmetry, $N_1 \oplus_{P_c} N_2'$) is sound, and then $N_1' \oplus_{P_c} N_2'$ is also sound.

Thus, we have shown that it is possible to even simultaneously refine $N_1$ and $N_2$ in the sound abstract model $N_1 \oplus_{P_c} N_2$ by sound models $N_1'$ and $N_2'$ obtained from filtered event logs, s.t. the result $N_1' \oplus_{P_c} N_2'$ is also sound.

\begin{example}
	In Fig. \ref{final}(a), we show the result of composing, by means of $\alpha$-morphisms, $N_1' \oplus_{P_c} N_2$ and $N_1 \oplus_{P_c} N_2'$ constructed in Example 1. 
	This composition can also be obtained directly by the channel-composition of $N_1'$ and $N_2'$. 
	The obtained model meets the desired requirement stated in Section 1: we can identify the agents as subnets and explicitly see their asynchronous interaction. 
\end{example}



\begin{figure}[H]
	\centering
	\vspace{-0.6cm}
	\subfigure[composition $N$ isomorphic to $N'_1 \oplus_{P_c} N_2'$]{\includegraphics[height=5.6cm]{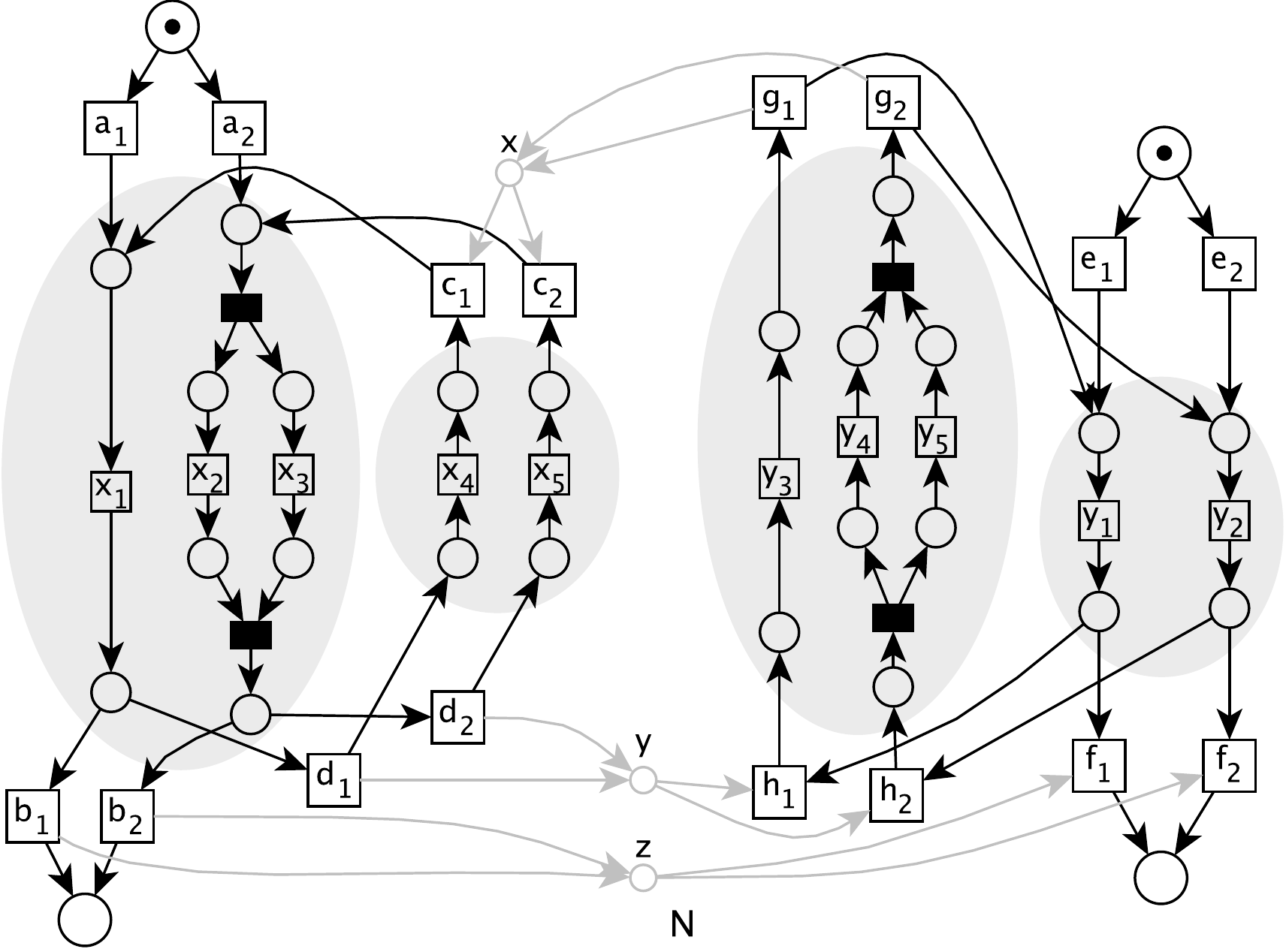}}
	\subfigure[diagram]{\includegraphics[width=4.5cm]{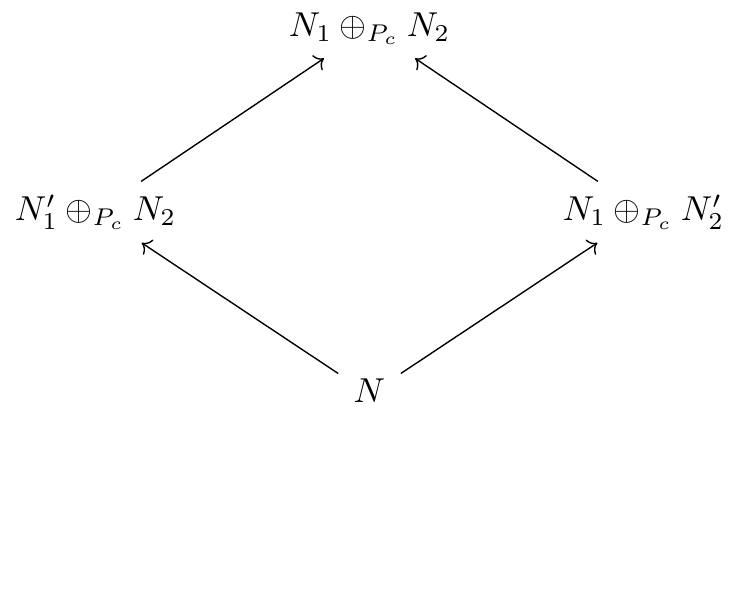}}
	\caption{Composition of $N_1' \oplus_{P_c} N_2$ and $N_1 \oplus_{P_c} N_2'$ based on $\alpha$-morphisms} \label{final}
	\vspace{-0.9cm}
\end{figure}

\subsection{Comparing Quality of Presented MAS Process Models}
	Process models of MAS presented in Fig. \ref{exintro}(b) and in Fig. \ref{exintro}(c) are discovered directly from the log produced by simulating the model shown in Fig. \ref{exintro}(a). The composition shown in Fig. \ref{final}(a), in fact, is obtained from the same log. That is why, we have compared their quality using the standard process discovery metrics \cite{Quality12}. \emph{Fitness} measures how accurately a model can replay traces of an initial event log. Intuitively, \emph{precision} indicates a ratio between the behavior given by the log and the one allowed by the model. If it is low, then a model allows for too much additional behavior. 
	Table \ref{maintab} provides the result of this quality analysis.
	
	\begin{table}[h]
		\vspace{-0.3cm}
		\centering
		\caption{Fitness and precision of MAS models presented in the paper}
		\label{maintab}
		\begin{tabular}{@{}lllcccc@{}}
			\toprule
				\textbf{Model} & \textbf{Algorithm} & \textbf{Discovery} &\phantom{s}& \textbf{Fitness} &\phantom{s}& \textbf{Precision} \\
			\midrule
				Figure \ref{exintro}(b) & Inductive miner & Direct &&1,0000&&0,1732 \\
				Figure \ref{exintro}(c) & ILP miner & Direct &&1,0000&&0,8516 \\
				Figure \ref{final}(a) & Inductive miner & Composed &&1,0000&&0,8690 \\
			\bottomrule
		\end{tabular}
		\vspace{-0.3cm}
	\end{table}
	The result of the compositional process discovery shows the increase in precision resulting from the separate analysis of agent behavior. However, we can see that the precision of the composed model is close to that of the model obtained by ILP miner.
	Thus, the model can be appropriate in terms of precision, but not in terms of the MAS structure.
	
\section{Conclusion and Future Work}

In this paper, we have proposed a compositional approach to discover process models of multi-agent systems from event logs.
We have considered asynchronous agent interactions which are modeled by a channel-composition of two nets.
We assume that events can be partitioned, s.t. system event logs are filtered according to agent behavior to discover their detailed models.
In order to guarantee that their composition is sound, we have proposed to abstract agent models, by using $\alpha$-morphisms, w.r.t. interacting actions thus constructing an abstract model of a communication protocol. 
The general algorithm of constructing $\alpha$-morphisms is one of the open problems which is a subject for further research.
We have proven that, when this abstract protocol is sound, the direct composition of the detailed models is also sound.
In proving this fact, we have used two intermediate models corresponding to a detailed view of one agent composed with the abstract view of the other.

The obtained system model is structured in such a way that it is possible to identify agent models as components and their interactions are clearly expressed.
We have compared the quality of directly obtained models with the quality of models obtained by the proposed approach. 
The quality of a composed model is seen to be at the appropriate level in terms of fitness and precision being the general process discovery quality dimensions.
In the future, we suggest introducing a structural indicator showing the extent to which it is easy to identify agents as parts of the model as well as their interaction.

We plan to explore more general asynchronous interactions, e.g. when the same channel can be used by several agents as well as the possibility to have both asynchronous and synchronous communication. Moreover, we would like to generalize the proposed approach to other classes of nets. Another line of research will be focused on dealing with dynamic changes. In particular, intermediate models, mentioned above, can be used to check if changes in the behavior of an agent affect the soundness of the whole system.


%
%
%
%
%
%
		
	\bibliographystyle{splncs03}
	\bibliography{reflist}

\begin{thebibliography}{10}
\providecommand{\url}[1]{\texttt{#1}}
\providecommand{\urlprefix}{URL }

\bibitem{Aalst00}
van~der Aalst, W.M.P.: Workflow verification: Finding control-flow errors using
  petri-net-based techniques. In: Business Process Management: Models,
  Techniques, and Empirical Studies. LNCS, vol. 1806, pp. 161--183. Springer,
  Heidelberg (2000)

\bibitem{Aalst11}
van~der Aalst, W.M.P., van Hee, K.M., ter Hofstede, A.H.M., Sidorova, N.,
  Verbeek, H.M.W., Voorhoeve, M., Wynn, M.T.: Soundness of workflow nets:
  classification, decidability, and analysis. Formal Aspects of Computing
  23(3),  333--363 (2011)

\bibitem{Augusto2017}
Augusto, A., Conforti, R., Dumas, M., Rosa, M.L., Maggi, F.M., Marrella, A.,
  Mecella, M., Soo, A.: Automated discovery of process models from event logs:
  Review and benchmark. CoRR  abs/1705.02288 (2017)

\bibitem{Baldan01}
Baldan, P., Corradini, A., Ehrig, H., Heckel, R.: Compositional modeling of
  reactive systems using open nets. In: Larsen, K.G., Nielsen, M. (eds.) CONCUR
  2001. LNCS, vol. 2154, pp. 502--518. Springer, Heidelberg (2001)

\bibitem{Bednarczyk03}
Bednarczyk, M.A., Bernardinello, L., Caillaud, B., Paw{\l}owski, W., Pomello,
  L.: Modular system development with pullbacks. In: ICATPN 2003. LNCS, vol.
  2679, pp. 140--160. Springer, Heidelberg (2003)

\bibitem{Bernardinello2013}
Bernardinello, L., Mangioni, E., Pomello, L.: Local state refinement and
  composition of elementary net systems: An approach based on morphisms. In:
  Transactions on Petri Nets and Other Models of Concurrency VIII. LNCS, vol.
  8100, pp. 48--70. Springer, Heidelberg (2013)

\bibitem{Nhat}
Bernardinello, L., Monticelli, E., Pomello, L.: On preserving structural and
  behavioural properties by composing net systems on interfaces. Fundamenta
  Informaticae  80(1-3),  31--47 (2007)

\bibitem{BoxCalc}
Best, E., Devillers, R., Hall, J.G.: The box calculus: A new causal algebra
  with multi-label communication. In: APN 1992. LNCS, vol. 609, pp. 21--69.
  Springer, Heidelberg (1992)

\bibitem{Quality12}
Buijs, J.C.A.M., van Dongen, B.F., van~der Aalst, W.M.P.: On the role of
  fitness, precision, generalization and simplicity in process discovery. In:
  OTM 2012. LNCS, vol. 7565, pp. 305--322. Springer, Heidelberg (2012)

\bibitem{Cardinale13}
Cardinale, Y., El~Haddad, J., Manouvrier, M., Rukoz, M.: Web service
  composition based on petri nets: Review and contribution. In: RED 2012. LNCS,
  vol. 8194, pp. 83--122. Springer, Heidelberg (2013)

\bibitem{Fabre06}
Fabre, E.: On the construction of pullbacks for safe petri nets. In: Donatelli,
  S., Thiagarajan, P.S. (eds.) ICATPN 2006. LNCS, vol. 4024, pp. 166--180.
  Springer, Heidelberg (2006)

\bibitem{Valk2003}
Girault, C., Valk, R.: Petri Nets for Systems Engineering: A Guide to Modeling,
  Verification, and Applications. Springer, Heidelberg (2003)

\bibitem{Haddad13}
Haddad, S., Hennicker, R., M{\o}ller, M.H.: Channel properties of
  asynchronously composed petri nets. In: ICATPN 2013. LNCS, vol. 7927, pp.
  369--388. Springer, Heidelberg (2013)

\bibitem{vanHee2011}
van Hee, K.M., Mooij, A.J., Sidorova, N., van~der Werf, J.M.:
  Soundness-preserving refinements of service compositions. In: Web Services
  and Formal Methods. LNCS, vol. 6551, pp. 131--145. Springer, Heidelberg
  (2011)

\bibitem{vanHee2010}
van Hee, K.M., Sidorova, N., van~der Werf, J.M.: Construction of asynchronous
  communicating systems: Weak termination guaranteed! In: Software Composition.
  LNCS, vol. 6144, pp. 106--121. Springer, Heidelberg (2010)

\bibitem{Kalenkova14-1}
Kalenkova, A.A., Lomazova, I.A.: Discovery of cancellation regions within
  process mining techniques. Fundamenta Informaticae  133(2-3),  192--209
  (2014)

\bibitem{Kalenkova14}
Kalenkova, A.A., Lomazova, I.A., van~der Aalst, W.M.P.: Process model
  discovery: A method based on transition system decomposition. In: ICATPN
  2014. LNCS, vol. 8489, pp. 71--90. Springer, Heidelberg (2014)

\bibitem{Inductive13}
Leemans, S.J.J., Fahland, D., van~der Aalst, W.M.P.: Discovering
  block-structured process models from event logs - a constructive approach.
  In: ICATPN 2013. LNCS, vol. 7927, pp. 311--329. Springer, Heidelberg (2013)

\bibitem{Lomazova10}
Lomazova, I.A.: Interacting workflow nets for workflow process re-engineering.
  Fundamenta Informaticae  101(1-2),  59--70 (2010)

\bibitem{Lomazova13}
Lomazova, I.A., Romanov, I.V.: Analyzing compatibility of services via resource
  conformance. Fundamenta Informaticae  128(1-2),  129--141 (2013)

\bibitem{Nielsen92}
Nielsen, M., Rozenberg, G., Thiagarajan, P.: Elementary transition systems.
  Theoretical Computer Science  96(1),  3--33 (1992)

\bibitem{Padberg2003}
Padberg, J., Urb{\'a}{\v{s}}ek, M.: Rule-based refinement of petri nets: A
  survey. In: Petri Net Technology for Communication-Based Systems: Advances in
  Petri Nets. LNCS, vol. 2472, pp. 161--196. Springer, Heidelberg (2003)

\bibitem{Pankratius05}
Pankratius, V., Stucky, W.: A formal foundation for workflow composition,
  workflow view definition, and workflow normalization based on petri nets. In:
  APCCM 2005, vol. 43. pp. 79--88. Australian Computer Society, Inc. (2005)

\bibitem{Reisig2013}
Reisig, W.: Understanding Petri Nets: Modeling Techniques, Analysis Methods,
  Case Studies. Springer, Heidelberg (2013)

\bibitem{WFRes}
Siegeris, J., Zimmermann, A.: Workflow model compositions preserving relaxed
  soundness. In: BPM 2006. LNCS, vol. 4102, pp. 177--192. Springer (2006)

\bibitem{Winskel87}
Winskel, G.: Petri nets, morphisms and compositionality. In: APN 1985. LNCS,
  vol. 222, pp. 453--477. Springer, Heidelberg (1986)

\end{thebibliography}

\newpage

\end{document}